\newtheorem{thm}{Theorem}
\newtheorem{prop}[thm]{Proposition} 
\newtheorem{lemma}[thm]{Lemma}
\theoremstyle{definition}
\newtheorem{defi}{Definition}
\theoremstyle{remark}
\newcommand*{\set}[1]{\left\{#1\right\}}
\newcommand*{\good}{G}
\newcommand*{\bad}{B}
\newcommand*{\x}{X}
\newcommand*{\y}{Y}
\newcommand*{\w}{w}
\title{Competitive pricing despite search costs if lower price signals quality}
\author{Sander Heinsalu\thanks{Research School of Economics, Australian National University. HW Arndt Building, 25a Kingsley St, Acton ACT 2601, Australia.
Email: sander.heinsalu@anu.edu.au, 
website: \url{http://sanderheinsalu.com/}
\newline
The author thanks Larry Samuelson, George Mailath, Gabriel Carroll, Gabriele Gratton, Jack Stecher, David Frankel, Simon Anderson, Idione Meneghel, Vera te Velde and Andrew McLennan for comments and suggestions.
}}
\date{\today}
\begin{document}
\maketitle

\begin{abstract}
%
%
I show that firms price almost competitively and consumers can infer product quality from prices in markets where firms differ in quality and production cost, and learning prices is costly. Bankruptcy risk or regulation links higher quality to lower cost. If high-quality firms have lower cost, then they can signal quality by cutting prices. Then the low-quality firms must cut prices to retain customers. This price-cutting race to the bottom ends in a separating equilibrium in which the low-quality firms charge their competitive price and the high-quality firms charge slightly less. 

Keywords: Price signalling, Diamond paradox, price dispersion, incomplete information, price war. 
	
JEL classification: D82, C72, D41. 
\end{abstract}

In many markets, looking up prices is costly for consumers. Even in online markets, the few moments it takes to check a (second) website constitutes a positive cost. 
Firms usually have private information about their cost and quality. This paper shows that with negatively related private cost and quality, firms set a price that is perfectly competitive or close to it. 
The negative association of cost and quality can arise from several causes. 
A more skilled tradesperson or a firm with better equipment can provide higher-quality service with less time and effort, thus at a lower cost. Examples are tire change and rotation using a specialised machine versus `by hand', ironing a shirt using a dummy (e.g.\ Siemens Dressman) extruding hot air, measuring distance with a laser rangefinder instead of tape. 
Economies of scale imply a lower marginal cost for larger producers, and learning by doing improves their quality. For example, a larger insurer is less risky (better for policyholders) and has lower overhead costs per policy.\footnote{Warren Buffett's 2014 letter to shareholders (\url{http://www.berkshirehathaway.com/letters/letters.html}) p.~11 describes the higher underwriting profit, lower cost and smaller risk of default of Berkshire Hathaway insurance businesses relative to competitors.
} 
Amazon has greater variety, faster delivery and a lower cost per package delivered than smaller online sellers. 

Regulation can cause a low-quality firm to have a higher cost. 
If a low-quality firm's product is more likely to be faulty and a regulator punishes firms for faulty products, then a low-quality firm has a greater cost (of production plus expected punishment) per product sold. Consumer lawsuits over bad quality have a similar effect. 
Regulation can also turn a low cost into an incentive to improve quality. A low-cost firm optimally prices lower than a high-cost one. If there is no quality difference, then demand is greater for a low-cost firm. If a regulator checks firms with a larger market share more often and punishes bad quality, then larger producers (those with lower cost) have a greater incentive to improve quality. Higher quality further increases demand for a low-cost firm. 

Optimal allocation of managerial talent (or some other resource) between cost reduction and quality improvement also links lower cost to higher quality. Improving quality is subject to moral hazard, because consumers pay based on the quality they expect, not the quality that the firm chooses. Cost reduction benefits the firm directly, so it is optimal to reduce cost maximally before improving quality. If managerial talent differs between firms, then those with high talent reduce their cost to a minimum and then might as well improve quality. The low-talent firms stay above the minimal cost and don't improve quality. 

In the markets studied in this paper, there are at least two firms, each of which draws an independent type, either \emph{good} or \emph{bad}. The good type has lower marginal cost and higher quality than the bad. Each firm knows its own type. The consumers and the other firms only have a common prior over a firm's type. First the firms simultaneously set prices. Second, each consumer observes the price of one firm for free and chooses either to buy from this firm, leave the market or pay a small cost to learn the price of another firm. Finally, each consumer who learned chooses either to buy from one of the firms whose price he knows or leave the market. The consumers have a distribution of valuations. A higher-valuation consumer values high quality relatively more. 
Consumers update their beliefs about the type of a firm whose price they see using Bayes' rule whenever possible. 
The equilibrium concept is perfect Bayesian equilibrium (PBE). A unique equilibrium remains after 
refining with the Intuitive Criterion. 

In equilibrium, prices are close to competitive 
due to a \emph{race to the bottom} that consists of two forces. One is downward price signalling, i.e.\ the high-quality firm reduces price to distinguish itself from the low-quality firm and attract greater demand. The second force is that a low-quality firm cuts price to deter consumers from leaving. The low-quality firms are in Bertrand competition over the consumers who learn more than one price, which all consumers do when faced with a price indicative of a low-quality firm. 

The race to the bottom ends when the high-quality firm prices at the marginal cost of the low-quality firm. The low-quality firm's price is its marginal cost plus the minimal monetary unit---the same as under complete-information Bertrand competition between two low-quality firms when the consumers have zero learning cost. 
Bertrand competition between two known high-quality firms leads to a lower price than under incomplete information, but between a known high-quality and low-quality firm to a higher price. 
The close-to-competitive pricing contrasts with the paradoxical result of \cite{diamond1971} that without uncertainty about the costs and qualities of the firms, the unique equilibrium features monopoly pricing and no consumer learning. 


Competition with privately known cost and quality also contrasts with monopoly under private information, and with competition when quality is observed together with the price. 
In monopoly, the good type still signals its quality by reducing its price to a level less profitable for the bad type than the bad type's monopoly price. However, the bad type has no incentive to cut price below its monopoly price. 

In competition when paying a learning cost leads to observing both price and quality, the low-quality firms are still in a Bertrand-like situation and compete to a low price. However, a high-quality firm has no incentive to reduce price to signal, because customers stay at a high-quality firm even at a high price. 
The combination of competition and unobservable firm characteristics is thus necessary for competitive pricing, as well as sufficient. 

Empirical evidence for the model can be found in the car industry, 
namely that high-quality cars have a lower price and production cost. 
According to \cite{vasilash1997}, the assembly cost of more reliable cars is smaller, controlling for vehicle category, e.g.\ subcompact, compact, etc. 
The rank correlation between the price\footnote{The price used is the average of the average low and high price paid from the US News \& World Report Car Ranking and Advice \url{https://cars.usnews.com/cars-trucks/
}. Author's calculations.}
and the CarMD  index\footnote{\url{https://www.carmd.com/wp/vehicle-health-index-introduction/2015-carmd-manufacturer-vehicle-rankings/}} 
of repair incidents (higher index means more breakdowns) is positive, but statistically insignificant in the sample of 40 cars that belong to both the top 288 new cars by number sold in the US in 2015\footnote{\url{http://www.goodcarbadcar.net/2015/07/usa-20-best-selling-cars-june-2015-sales-figures.html}} 
and the 100 most reliable in 2015 according to CarMD. 
The average cost per repair is also larger for more expensive cars according to CarMD, but this is less surprising, because parts for more expensive vehicles cost more. 
The cheapest cars to maintain according to  YourMechanic\footnote{\url{https://www.yourmechanic.com/article/the-most-and-least-expensive-cars-to-maintain-by-maddy-martin}} are those of East Asian manufacturers, with Toyota leading. These are also the cheapest to buy according to the US News \& World Report. 

Given the above, it is not surprising that the per-car profits are higher for Toyota than for Detroit's Big 3 automakers \citep{wayland2015}. 
Competition has resulted in approximately zero profit for the higher-cost manufacturers. For example, the US government had to bail out Detroit's Big 3 carmakers during the 2008 financial crisis. 
Price close to marginal cost (profit close to zero) for low-quality producers is consistent with the model. The almost zero correlation of price and quality also corresponds to the model, because the prices of producers of different quality are close to each other in equilibrium.

\subsubsection*{Literature} 
The foremost paper on costly learning of prices is \cite{diamond1971}, where competing firms set the monopoly price. A monopoly price or above is also found in \cite{diamond1987,axell1977,reinganum1979,klemperer1987} and \cite{garcia+2017}. 
A number of solutions to the Diamond paradox have been proposed. With a positive fraction of consumers having zero learning cost, as in \cite{butters1977,stahl1996,klemperer1987} and \cite{benabou1993}, firms put a positive probability on the competitive price. 
A similar idea to zero learning cost is that consumers observe multiple prices with positive probability, for example by seeing price advertisements \citep{salop+stiglitz1977,burdett+judd1983,robert+stahl1993}. 
If consumers have private taste shocks, then that generates search and below-monopoly pricing \citep{wolinsky1986,anderson+renault1999,zhou2014}. 
Prices below the monopoly level also occur with repeat purchases, as in \cite{salop+stiglitz1982,bagwell+ramey1992}. 

The current paper does not rely on zero learning cost, multiple free price observations, taste shocks or repeat purchases. To the author's knowledge, this work is the first to combine signalling and consumer search costs. 
The informative price difference between firm types endogenously gives consumers the incentive to learn, in contrast with the exogenous incentive created by taste shocks or a zero search cost. Multiple free price observations constitute exogenous learning, also differing from an endogenous motivation to learn. 
In the current work, the incentive for firm types to set different and low prices is endogenous, driven by consumer beliefs responding to the price. 
This differs from \cite{salop+stiglitz1982} where firms are indifferent between selling two units at a lower price or one unit at a higher, and these prices are determined by the exogenous willingness to pay of consumers. 
In \cite{bagwell+ramey1992}, the motivation for a low price is that in the infinitely repeated game, consumers start to boycott firms that raise price. This motivation is endogenous, but different from the current paper. 


Downward\footnote{As opposed to the upward price signalling (higher-quality firm sets a higher price) studied by the large literature following \cite{milgrom+roberts1986}.
} 
price signalling by a single firm has been studied in \cite{shieh1993}. A similar idea is in \cite{simester1995}, where multiproduct firms (whose prices for all products are positively correlated) signal by a low price on one product. 
In \cite{rhodes2015}, a multiproduct monopolist stocking more products (better for the consumers) charges lower prices. The result is similar to a higher-quality firm charging less, but the mechanism is different: adding a product attracts additional customers with relatively low valuations for the other products. When the average valuation of customers falls, the monopoly price falls. 

The receivers of the price signal are the consumers in this paper, which differs from limit pricing (as in \cite{milgrom+roberts1982b} and the literature following) where the receivers are potential entrants. 

The next section sets up the model. 
Section~\ref{sec:existence} constructs an equilibrium with near-competitive pricing in a market with consumer search costs. 
and shows that this equilibrium is the unique one 
that survives the Intuitive Criterion of \cite{cho+kreps1987}. The robustness of the results to relaxing various assumptions is discussed in Section~\ref{sec:robust}. 

\section{Price competition with costly learning of prices}
\label{sec:setup}

There are two firms indexed by $i\in\set{\x,\y}$, each with a type $\theta\in\set{\good,\bad}$ (good and bad, respectively). 
Each firm knows its own type, but not that of the other. Types are i.i.d.\ with $\Pr(\good)=\mu_0\in(0,1)$. There is a continuum of consumers of mass $1$ with types $v\in[0,\overline{v}]$ distributed according to the strictly positive continuous pdf $f_v$, with cdf $F_v$, independently of firm types. 
Firms and consumers know their own type, but only have a common prior belief over the types of others. 

The timeline of the game is as follows.
\begin{enumerate}
\setlength\itemsep{0.0pt}
\item Nature draws independent types for firms and consumers, and assigns half the consumers to one firm, half to the other, independently of types. Each player observes his own type, but not the types of the others.
\item Firms simultaneously set prices. 
\item Each consumer observes the price of his assigned firm and chooses either to buy from this firm, learn the price of the other firm, or leave the market.
\item Each consumer who chose to learn observes both firms' prices and chooses either to buy from his assigned firm, buy from the other firm, or leave the market. 
\end{enumerate}

A type $\good$ firm has marginal cost $c_{\good}$ normalised to $0$, and type $\bad$ has $c_{\bad}>0$. The quality of a type $\good$ firm is higher. 
Specifically, a type $v$ consumer values firm type $\bad$'s product at $v$ and $\good$'s product at $h(v)\geq v$, with $h'>1$, $h(\overline{v})<\infty$. To ensure that demand for $\bad$'s good is positive, assume $\overline{v}>c_{\bad}$. Consumers and firms are risk-neutral. Each consumer has unit demand. 

After the firms' cost and quality are determined, the firms simultaneously set prices $P_{\x},P_{\y}\in S_P:=\{0,m,2m,\ldots,Nm\}$, where $m>0$ is the smallest monetary unit.\footnote{
Using a discrete price grid avoids problems with equilibrium existence (explained in Section~\ref{sec:robust}), which are not the focus of this paper. 
An alternative to the grid is to restrict prices to $ \mathbb{R}_+\setminus (c_{\bad}-\rho,c_{\bad}+\rho)$ for some $\rho>0$. 
} 
Assume $c_{\bad}=km\geq h(0)-m$ for some $k\in\mathbb{N}$ (costs are measured in terms of the minimal monetary unit, and not all consumers buy at a price just below the bad type's cost). Assume $Nm\geq h(\overline{v})$. Prices above $Nm$ are unavailable w.l.o.g., because no consumer buys at any $P>h(\overline{v})$. 

For a set $S$, denote the set of probability distributions on $S$ by $\Delta S$. 
A behavioural strategy of firm $i$ is $\sigma_i:\set{\good,\bad}\rightarrow \Delta S_P$, so $\sigma_i(\theta)(P)$ is the probability that type $\theta$ of firm $i$ puts on price $P$.

A consumer sees the price that his assigned firm sets and can learn the price of the other firm at cost $c_{\ell}>0$. 
Define $c_{\bad}^{+}:=c_{\bad}+m$. Assume that $c_{\ell}\leq \mu_0 (h(c_{\bad}^{+})-c_{\bad})$, i.e.\ the learning cost is small relative to the prior probability of the good type firm and the valuation difference between consumer type $c_{\bad}^{+}$ for a good type firm and consumer type $c_{\bad}$ for a bad type firm. 
Assume that $m<\min\{c_{\ell},\;\frac{\mu_0 c_{\bad}}{1-\mu_0},\;\overline{v}-c_{\bad}\}$, i.e.\ the minimal monetary unit is small relative to the costs, the prior, and the maximal valuation for the bad type. 
The cost difference $c_{\bad}-0$ between the types, as well as the quality difference $h(0)-0$ may be small, provided the learning cost and minimal monetary unit are even smaller. 

After seeing the price of his assigned firm, a consumer decides whether to buy from this firm (denoted $b$), learn the other firm's price ($\ell$) or not buy at all ($n$). Upon learning the price of the other firm, the consumer decides whether to buy from firm $\x$ (denoted $b_{\x}$), firm $\y$ ($b_{\y}$) or not buy at all ($n_{\ell}$). 
A consumer's behavioural strategy consists of $\sigma_1:[0,\overline{v}]\times S_P\rightarrow\Delta\set{b,n,\ell}$ and $\sigma_2:[0,\overline{v}]\times S_P^2\rightarrow \Delta\set{b_{\x},b_{\y},n_{\ell}}$, so that e.g.\  $\sigma_2(v,P_i,P_j)(b_{j})$ is the probability that a consumer type $v$ initially at firm $i$ buys from $j\neq i$ after learning $P_j$. 

A type $\theta$ firm's \emph{ex post} payoff if mass $D$ of consumers buy from it at price $P$ is $(P-c_{\theta})D$. 
Assume that the full-information monopoly profit function $P[1-F_v(h^{-1}(P))]$ of firm type $\good$ strictly increases in $P$ on $[0,c_{\bad}+m]$, so that the full-information monopoly price $P_{\good}^{m}$ of $\good$ is strictly above $c_{\bad}$ (this is relaxed in Section~\ref{sec:robust}). 

A consumer's posterior belief about firm $i$ after observing its price $P_i$ and expecting the firm to choose strategy $\sigma_i^*$ is  
\begin{align}
\label{mu}
\mu_i(P_i) :=\frac{\mu_0\sigma_i^*(\good)(P_i)}{\mu_0\sigma_i^*(\good)(P_i) +(1-\mu_0)\sigma_i^*(\bad)(P_i)} 
\end{align}
whenever $\mu_0\sigma_i^*(\good)(P_i) +(1-\mu_0)\sigma_i^*(\bad)(P_i)>0$, and arbitrary otherwise. 
The gain from trade that consumer type $v$ expects from buying from firm $i$ at price $P$ is denoted $\w(v,i,P):=\mu_i(P)h(v)+(1-\mu_i(P))v-P$. 

The solution concept used is perfect Bayesian equilibrium (PBE), hereafter simply called equilibrium. Later, a unique equilibrium 
is selected using the Intuitive Criterion of \cite{cho+kreps1987}. 
\begin{defi}
\label{def:mix}
An equilibrium consists of $\sigma_{\x}^*,\sigma_{\y}^*,\sigma_1^*,\sigma_2^*$ and $\mu_{\x},\mu_{\y}:S_{P}\rightarrow [0,1]$ satisfying the following for $\theta\in\set{\good,\bad}$, $v\in[0,\overline{v}]$, $i,j\in\set{\x,\y}$, $i\neq j$: 
\begin{enumerate}[(a)]
\item if $\w(v,i,P_{i})\geq \max\set{0,\;\w(v,j,P_{j})}$, then $\sigma_2^*(v,P_{i},P_{j})(b_i)=1$, and if in addition $\w(v,i,P_{i})> \w(v,j,P_{j})$, then $\sigma_2^*(v,P_{j},P_{i})(b_i)=1$, 
\item if $\max\set{\w(v,i,P_{i}),\; \w(v,j,P_{j})}<0$, then $\sigma_2^*(v,P_{i},P_{j})(n_{\ell})=1$, 
\item if $\w(v,i,P_{i})> \max\{0,\; \sum_{P_{j}\in S_P}\max\{\w(v,i,P_{i}),\;\w(v,j,P_{j})\}[\mu_0\sigma_j^*(\good)(P_{j}) +(1-\mu_0)\sigma_j^*(\bad)(P_{j})] -c_{\ell}\}$, 
then $\sigma_1^*(v,P_{i})(b)=1$,
\item if $\w(v,i,P_{i})\leq \sum_{P_{j}\in S_P}\max\{0,\;\w(v,i,P_{i}),\;\w(v,j,P_{j})\}[\mu_0\sigma_j^*(\good)(P_{j}) +(1-\mu_0)\sigma_j^*(\bad)(P_{j})] -c_{\ell}\geq 0$, 
then $\sigma_1^*(v,P_{i})(\ell)=1$,
\item if $\max\{\w(v,i,P_{i}),\; \sum_{P_{j}\in S_P}\max\{0,\;\w(v,j,P_{j})\}[\mu_0\sigma_j^*(\good)(P_{j}) +(1-\mu_0)\sigma_j^*(\bad)(P_{j})] -c_{\ell}\}< 0$, 
then $\sigma_1^*(v,P_{i})(n)=1$, 
\item if $\sigma_i^*(\theta)(P_{i})>0$, then $P_{i}\in\arg\max_{P\in S_P} (P-c_{\theta})D_i(P)$, where 
\begin{align}
\label{demand}
&D_i(P) :=\frac{1}{2}\int_0^{\overline{v}} \sum_{P_{j}\in S_P} \{\sigma_1^*(v,P)(b) +\sigma_1^*(v,P)(\ell)\sigma_2^*(v,P,P_{j})(b_i) \\&\notag+\sigma_1^*(v,P_{j})(\ell)\sigma_2^*(v,P_{j},P)(b_i)\}[\mu_0\sigma_j^*(\good)(P_j) +(1-\mu_0)\sigma_j^*(\bad)(P_j)] dF_v(v),
\end{align}
\item if $\sigma_i^*(\good)(P)>0$ or $\sigma_i^*(\bad)(P)>0$, then $\mu_i(P)$ is derived from~(\ref{mu}). 
\end{enumerate}
\end{defi}
The equilibrium profit of type $\theta$ of firm $i$ is denoted $\pi_{i\theta}^*$; it equals $(P-c_{\theta})D(P)$ for any $P$ s.t.\ $\sigma_i^*(\theta)(P)>0$. 

Some tie-breaking rules are built into the equilibrium definition, e.g.\ a consumer indifferent between $b$ and $\ell$ chooses $\ell$. The results remain substantially the same if the tie-breaking rules are modified, as discussed in Section~\ref{sec:robust}.
The next section constructively proves equilibrium existence by guessing and verifying.

\section{Equilibrium}
\label{sec:existence}

This section constructs an equilibrium in which consumers put probability one on a firm being the good type if the price is below the bad type's cost, otherwise probability one on the bad type. The good type firm sets a price equal to the bad type's cost. The bad type's price is its cost plus $m$. A consumer initially facing a price less than the bad type's cost either buys (when his valuation for the good type is above the price) or leaves the market. A consumer who initially sees a price strictly above the bad type's cost learns (when his expected valuation for the other firm is above $c_{\ell}$) or leaves the market. After learning, all consumers buy from the lower-priced firm or leave the market. 

The formal definition of the \textbf{guessed equilibrium} is the following: 
\begin{enumerate} 
\item Beliefs: $P\leq c_{\bad}\Rightarrow\mu_i(P)=1$ and $P> c_{\bad}\Rightarrow \mu_i(P)=0$  for $i\in\set{\x,\y}$.
\item Each firm's type $\good$ sets price $c_{\bad}$ and type $\bad$ sets $c_{\bad}^{+}$. 
\item If $\mu_i(P)=1$, then $h(v)\geq P \Rightarrow \sigma_1^*(v,P)(b)=1$ and $h(v)< P\Rightarrow \sigma_1^*(v,P)(n)=1$. 
\item If $\mu_i(P)=0$ and $\mu_0 (h(v)-c_{\bad})+(1-\mu_0)(v-c_{\bad}^{+})-c_{\ell}\geq0$, then $\sigma_1^*(v,P)(\ell)=1$. 
\\If $\mu_i(P)=0$ and $\mu_0 (h(v)-c_{\bad})+(1-\mu_0)(v-c_{\bad}^{+})-c_{\ell}< 0$, then $\sigma_1^*(v,P)(n)=1$. 
\item If $\w(v,i,P_{i})\geq \max\set{0,\;\w(v,j,P_{j})}$, then $\sigma_2^*(v,P_{i},P_{j})(b_i)=1$, and if in addition $\w(v,i,P_{i})> \w(v,j,P_{j})$, then $\sigma_2^*(v,P_{j},P_{i})(b_i)=1$. 
If $\max\set{\w(v,i,P_{i}),\; \w(v,j,P_{j})}<0$, then $\sigma_2^*(v,P_{i},P_{j})(n_{\ell})=1$. 
\end{enumerate}
Part 1 of the guessed equilibrium includes Definition~1(g) and also specifies beliefs at off-path prices. Beliefs are consistent with Bayes' rule~(\ref{mu}). Part 2 specialises Definition~1(f) to the guessed equilibrium. Parts 3--5 are simply the rewriting of Definition~1(a)--(e). 
Appendix~\ref{sec:existenceproof} proves that no player can profitably deviate from the guessed equilibrium. 

The idea of the proof is as follows. Consumers are clearly best responding to their belief, which is consistent with firm strategies. The bad type does not price below $c_{\bad}$, because it is weakly dominated by $c_{\bad}^{+}$. If consumers at a bad type learn and the other firm is the good type, then all consumers leave the bad type. Otherwise, the two bad types are in Bertrand competition over the consumers who learn. So the bad types undercut each other until pricing at $c_{\bad}^{+}$. 
A good type does not increase price above $c_{\bad}$, because the resulting fall in belief reduces expected profit below that obtained at price $c_{\bad}$. The reason is twofold. If the other firm is the good type, then all consumers leave. If the other firm is the bad type, then no consumers are drawn away from that firm, which would happen at price $c_{\bad}$ or below. At prices less than $c_{\bad}$, the Diamond paradox reasoning applies to the good types: each can raise its price to $m$ above that of the other without losing demand. This is because the consumers' learning cost is above $m$, so a price $m$ higher than expected does not motivate them to learn and switch, unless the price increase changes their belief. 

The guessed equilibrium already partly resolves the Diamond paradox, because its outcome differs from monopoly pricing and no search. Prices in the guessed equilibrium are close to competitive. 
Type $\bad$ prices the same as under Bertrand competition between the $\bad$ types with zero search cost and complete information. Type $\good$ prices below $\bad$. 
For a stronger resolution of the Diamond paradox, 
subsequent results will show that the guessed equilibrium introduced above is the unique one 
that survives the Intuitive Criterion. 
Without refinement, belief threats support other equilibria. For example, for high enough $\mu_0$, both firms pool on $c_{\bad}^{+}$, justified by the belief $\mu_i(c_{\bad}^{+})=\mu_0$ and if $P\neq c_{\bad}^{+}$, then $\mu_i(P)=0$. 

The following lemma shows the monotonicity of equilibrium demand and prices. Given the ranking of the costs and qualities of the types, the results are intuitive---the lower-cost type $\good$ sets a lower price and the higher-quality type $\good$ receives higher demand. 
Based on Lemma~\ref{lem:D2}, there cannot be two prices on which both types put positive probability and at one of which, demand is positive. 
\begin{lemma} 
\label{lem:D2}
In any equilibrium, if $\sigma_i^*(\good)(P_{\good})>0$ and $\sigma_i^*(\bad)(P_{\bad})>0$, then $D_i(P_{\good})\geq D_i(P_{\bad})$, and if in addition $0<D_i(P_{\bad})\leq D_i(P_{\good})$, then $P_{\good}\leq P_{\bad}$. 
\end{lemma}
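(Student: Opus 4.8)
The plan is a revealed-preference (single-crossing) argument resting on one structural fact: since consumers observe only the price and not the type, both types of firm $i$ face the \emph{same} demand schedule $D_i(\cdot)$ defined in~\eqref{demand}. Write $D_\good:=D_i(P_\good)$ and $D_\bad:=D_i(P_\bad)$. By Definition~\ref{def:mix}(f) and $c_\good=0$, the price $P_\good$ lies in $\arg\max_{P\in S_P}P\,D_i(P)$ and $P_\bad$ lies in $\arg\max_{P\in S_P}(P-c_\bad)D_i(P)$. Evaluating each maximisation at the other type's price gives the two inequalities
\begin{align*}
P_\good D_\good\ \ge\ P_\bad D_\bad, \qquad (P_\bad-c_\bad)D_\bad\ \ge\ (P_\good-c_\bad)D_\good .
\end{align*}
Adding them and cancelling the common term $P_\good D_\good+P_\bad D_\bad$ yields $c_\bad D_\good\ge c_\bad D_\bad$, and dividing by $c_\bad>0$ gives $D_\good\ge D_\bad$. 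This is the first assertion; note it makes the hypothesis ``$D_i(P_\bad)\le D_i(P_\good)$'' of the second assertion redundant, so that the only extra standing assumption there is $D_\bad>0$.

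For the second assertion I first record an auxiliary fact: in any equilibrium the bad type of firm $i$ earns nonnegative profit, hence $D_\bad>0$ forces $P_\bad\ge c_\bad$. Indeed, $(P_\bad-c_\bad)D_\bad\ge 0$ follows once the bad type has an available deviation with zero demand; the price $Nm$ works, because $Nm\ge h(\overline{v})$ gives $\w(v,i,Nm)=\mu_i(Nm)h(v)+(1-\mu_i(Nm))v-Nm\le h(\overline{v})-Nm\le 0$ for every $v$, with equality holding for at most one value of $v$ (an $F_v$-null set, as $\w(v,i,Nm)$ is strictly increasing in $v$), and Definition~\ref{def:mix}(a)--(e) imply that a consumer with $\w(v,i,P)<0$ never buys from $i$, whether directly or after learning; hence $D_i(Nm)=0$.

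Finally I combine these. If $P_\good<c_\bad$ then $P_\good<c_\bad\le P_\bad$ and we are done. If $P_\good\ge c_\bad$, then $P_\good-c_\bad\ge 0$ together with $D_\good\ge D_\bad$ gives $(P_\good-c_\bad)D_\good\ge(P_\good-c_\bad)D_\bad$; chaining this with the bad type's optimality inequality $(P_\bad-c_\bad)D_\bad\ge(P_\good-c_\bad)D_\good$ produces $(P_\bad-c_\bad)D_\bad\ge(P_\good-c_\bad)D_\bad$, and dividing by $D_\bad>0$ gives $P_\bad\ge P_\good$. In either case $P_\good\le P_\bad$.

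I expect the main obstacle to be not the comparative-statics core (which is essentially mechanical once the common-demand observation is in place) but the auxiliary fact that the bad type's equilibrium profit is nonnegative: it has to be extracted from the consumer best-response conditions in Definition~\ref{def:mix}, and one must handle with a little care the measure-zero set of consumers who are exactly indifferent at the zero-demand price.
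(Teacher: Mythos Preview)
Your argument is correct and follows the same revealed-preference route as the paper: both proofs derive $D_i(P_\good)\ge D_i(P_\bad)$ by combining the two optimality inequalities $P_\good D_\good\ge P_\bad D_\bad$ and $(P_\bad-c_\bad)D_\bad\ge(P_\good-c_\bad)D_\good$ (you add them, the paper chains them), and then obtain $P_\good\le P_\bad$ from the bad type's inequality together with $0<D_\bad\le D_\good$. Your explicit verification that $D_\bad>0$ forces $P_\bad\ge c_\bad$ via a zero-demand deviation is a detail the paper leaves implicit; it is needed to rule out the case $P_\good,P_\bad<c_\bad$, in which the bare inference ``$(P_\bad-c_\bad)D_\bad\ge(P_\good-c_\bad)D_\good$ and $0<D_\bad\le D_\good$ imply $P_\bad\ge P_\good$'' would not be valid, so your version is in fact slightly more complete.
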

The proofs of this and subsequent results are in Appendix~\ref{sec:proofs}. 


The next lemma shows that in any equilibrium satisfying the Intuitive Criterion in which not all consumers buy at price $c_{\bad}$ and belief $\mu_0$, neither firm sets a price at which demand is zero. Both types of both firms make positive profit, and the types set different prices with positive probability. 
To state the lemma, define $v(x)$ as the (unique) consumer valuation $v$ that satisfies $\mu_0h(v)+(1-\mu_0)v =x$, and define 
\begin{align}
\label{mstar}
m^*:=\min_{x\in [c_{\bad}, \mu_0h(\overline{v})+(1-\mu_0)\overline{v}-m]}x\left(\frac{1-F_{v}(h^{-1}(x))}{1-F_{v}(v(x))}-1\right).
\end{align} 
The function $\frac{1-F_{v}(h^{-1}(\cdot))}{1-F_{v}(v(\cdot))}$ 
is the ratio of demand at belief $1$ to demand at belief $\mu_0$. This function only depends on the primitives $F_{v},h,\mu_0$, so $m^*$ only depends on exogenous parameters. The ratio of demands is strictly greater than $1$ and continuous when the denominator is positive (as is the case when $x\leq \mu_0h(\overline{v})+(1-\mu_0)\overline{v}-m$), so $m^*>0$. 


\begin{lemma}
\label{lem:posprofit}
For any $m< m^*$, $i\in\set{\x,\y}$ and $\theta\in\set{\good,\bad}$, in any equilibrium satisfying the Intuitive Criterion, we have
$\pi_{i\theta}^*>0$ and there exists $P_i\geq c_{\bad}^{+}$ s.t.\ $D_i(P_i)>0$ and $\sigma_i^*(\bad)(P_i)>0=\sigma_i^*(\good)(P_i)$. 
\end{lemma}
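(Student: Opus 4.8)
The plan is to argue by contradiction, showing that if some type of some firm makes zero profit, or always pools with the other type on prices where it has positive demand, then one can construct an Intuitive-Criterion-violating deviation to a price just below $c_{\bad}$. First I would establish the positive-profit claim for the bad type: if $\pi_{i\bad}^*=0$, consider the bad type deviating to a price slightly above $c_{\bad}$, say $c_{\bad}^{+}$; I need to show this attracts positive demand. The key is that beliefs at such a price are pinned down (by whatever $\sigma_i^*$ prescribes there, or by the Intuitive Criterion off path), and since $c_{\bad}^{+}>c_{\bad}$ but still low enough that $v(c_{\bad}^{+})<\overline{v}$, a positive mass of high-valuation consumers has positive gain from trade even at belief $\mu_0$ or below; those consumers either buy immediately or, after learning, buy from firm $i$ when firm $j$'s realized price is higher. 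Comparing $D_i$ across the two sides of the learning margin gives $(c_{\bad}^{+}-c_{\bad})D_i(c_{\bad}^{+})>0$, contradicting $\pi_{i\bad}^*=0$. A symmetric but easier argument (the good type has cost $0$) handles $\pi_{i\good}^*>0$, using that the good type can always mimic whatever the bad type does and earn strictly more because its margin is larger by $c_{\bad}>0$ and, by Lemma~\ref{lem:D2}, its demand is weakly larger.

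**The separation claim.** For the second part — existence of $P_i\geq c_{\bad}^{+}$ with $D_i(P_i)>0$ and $\sigma_i^*(\bad)(P_i)>0=\sigma_i^*(\good)(P_i)$ — I would first use Lemma~\ref{lem:D2} plus the remark following it: there cannot be two distinct prices both in the support of type $\good$ and type $\bad$ at one of which demand is positive. Combined with the positive-profit conclusion just established (so that both types' support prices carry positive demand), this forces the good-type support and bad-type support to share at most one common price $P_0$, and at that price demand would have to be zero — but positive profit rules that out too. Hence the supports are disjoint. It then remains to show the bad type's price is at least $c_{\bad}^{+}$: the bad type never prices strictly below $c_{\bad}$ (weakly dominated, as noted in the text before the guessed equilibrium), and it cannot price exactly at $c_{\bad}$ with positive demand either, because a consumer seeing price $c_{\bad}$ with the equilibrium belief there — which, by disjointness of supports, assigns probability $1$ to the good type — would get gain from trade $h(v)-c_{\bad}$, and for the bad type to have positive profit at $c_{\bad}$ it needs $c_{\bad}-c_{\bad}=0$ margin, contradiction. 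So $\sigma_i^*(\bad)$ is supported on $\{c_{\bad}^{+},\ldots\}$, and picking any $P_i$ in that support (which has $D_i(P_i)>0$ since $\pi_{i\bad}^*>0$) finishes it.

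**Where the Intuitive Criterion enters, and the main obstacle.** The role of the refinement, and where $m<m^*$ is used, is to rule out the degenerate "belief-threat" equilibria described in the text (e.g. pooling on $c_{\bad}^{+}$ sustained by pessimistic off-path beliefs): the Intuitive Criterion forces beliefs at a downward deviation to $c_{\bad}$ to put probability $1$ on the good type, because the bad type can never gain from such a price (its profit there is zero, below $\pi_{i\bad}^*>0$) while the good type can — and the quantity $m^*$ is exactly the threshold below which the good type's demand gain from the favorable belief, $c_{\bad}\bigl(\frac{1-F_v(h^{-1}(c_{\bad}))}{1-F_v(v(c_{\bad}))}-1\bigr)$, exceeds the price drop $m$ it would suffer relative to an assumed pooling price $c_{\bad}^{+}$. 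I expect the main obstacle to be the bookkeeping in the demand comparison: $D_i(P)$ in~(\ref{demand}) aggregates immediate buyers, learners who return, and learners arriving from firm $j$, all integrated against the distribution of $P_j$, so showing the relevant deviation strictly raises profit requires carefully tracking which consumer valuations switch behavior at the deviation price and bounding the learner terms from below — the hypothesis "not all consumers buy at price $c_{\bad}$ and belief $\mu_0$" is what guarantees the marginal consumer $v(c_{\bad})$ is interior so that these mass comparisons are strict rather than vacuous.
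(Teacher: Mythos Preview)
Your argument for the separation claim has a genuine gap in the second paragraph. You write that Lemma~\ref{lem:D2} plus positive profit ``forces the good-type support and bad-type support to share at most one common price $P_0$, and at that price demand would have to be zero.'' This misreads the remark after Lemma~\ref{lem:D2}: what that lemma excludes is \emph{two} distinct shared prices with positive demand at one of them. A single pooling price with \emph{positive} demand is perfectly consistent with Lemma~\ref{lem:D2} and with $\pi_{i\theta}^*>0$ for both $\theta$. So you cannot conclude disjoint supports, and pure pooling on some $P_i\geq c_{\bad}^{+}$ survives your second-paragraph argument intact.

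This means the Intuitive Criterion is not an afterthought that cleans up ``degenerate'' belief-threat cases---it is the engine that kills the no-separation case, and it has to work at an arbitrary pooling price $P_i$, not just $c_{\bad}^{+}$. The paper's proof proceeds by supposing $\sigma_i^*(\bad)(P)>0\Rightarrow\sigma_i^*(\good)(P)>0$, uses Lemma~\ref{lem:D2} to force $\sigma_i^*(\bad)(P_i)=1$ at a single $P_i$ (hence $\mu_i(P_i)\leq\mu_0$), then identifies the set $\mathcal{P}_{\neg\bad}$ of prices the bad type would never deviate to regardless of belief, and takes $P_d=\max\{P\in\mathcal{P}_{\neg\bad}:P<P_i\}$ as the good type's candidate deviation. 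The condition $m<m^*$ is a minimum over the whole interval $[c_{\bad},\mu_0 h(\overline{v})+(1-\mu_0)\overline{v}-m]$ precisely because $P_i$ is not pinned down; your version of the inequality only covers $P_i=c_{\bad}^{+}$, $P_d=c_{\bad}$. You also need to check that $P_d$ and $P_i$ are adjacent (the paper does this via an auxiliary $P_n$), otherwise the $m^*$ bound does not apply. Separately, your positive-profit argument for $\bad$ is incomplete: deviating to $c_{\bad}^{+}$ need not yield positive demand if firm $j$ always undercuts---the paper's argument instead supposes \emph{both} firms have zero bad-type profit simultaneously and exploits that consumers with $v>P_i$ must buy from someone.
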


Lemma~\ref{lem:posprofit} provides the first component of the race to the bottom, namely the good types separating (at least partially) from the bad by setting a lower price. The Intuitive Criterion drives the separation, because it eliminates belief threats at low prices, which would otherwise deter the good types from price-cutting. 

The next lemma establishes a lower bound on the equilibrium price by showing that the good types price weakly above the cost of the bad type. 
\begin{lemma}
\label{lem:Hprice}
For any $m< m^*$ and $i\in\set{\x,\y}$, in any equilibrium satisfying the Intuitive Criterion, if $P_i< c_{\bad}$, then $\sigma_i^*(\good)(P_i)=0$. 
\end{lemma}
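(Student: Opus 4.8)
The plan is a proof by contradiction. Suppose that in some equilibrium satisfying the Intuitive Criterion with $m<m^*$ a good type prices strictly below $c_{\bad}$ with positive probability, and let $\underline{P}$ be the smallest price in $\supp\sigma_{\x}^*(\good)\cup\supp\sigma_{\y}^*(\good)$ that lies below $c_{\bad}$; say firm $i$'s good type attains it. First I would extract from Lemma~\ref{lem:posprofit} the facts I need: since $\pi_{i\bad}^*>0$ while a bad type's profit at any price $P\le c_{\bad}$ is $(P-c_{\bad})D_i(P)\le0$, no bad type of either firm ever puts mass on a price $\le c_{\bad}$. Hence $\underline{P}$ is played only by firm $i$'s good type, Bayes' rule~(\ref{mu}) gives $\mu_i(\underline{P})=1$, and $\pi_{i\good}^*=\underline{P}\,D_i(\underline{P})$; moreover $\underline{P}>0$ and $D_i(\underline{P})>0$ because $\pi_{i\good}^*>0$ by Lemma~\ref{lem:posprofit}. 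Finally, minimality of $\underline{P}$ together with bad types pricing $\ge c_{\bad}^{+}$ means every price played in equilibrium is $\ge\underline{P}$.

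Next I would exhibit a profitable deviation for firm $i$'s good type: raising its price from $\underline{P}$ to $\underline{P}+m\le c_{\bad}$. The belief there is $\mu_i(\underline{P}+m)=1$. If $\underline{P}+m$ is on path it is used only by good types (bad types avoid prices $\le c_{\bad}$), and if it is off path the Intuitive Criterion forces belief $1$: the bad type's payoff at $\underline{P}+m$ is at most $0<\pi_{i\bad}^*$ under every belief, so $\underline{P}+m$ is equilibrium-dominated for the bad type, whereas — by the demand estimate below — it is not for the good type. The core of the deviation argument is a Diamond-paradox step: at price $\underline{P}+m$ and belief $1$, a consumer assigned to $i$ with $h(v)\ge\underline{P}+m$ strictly prefers buying to searching. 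Because every rival price is $\ge\underline{P}$ and prices lie on the grid, the only event on which learning can raise this consumer's surplus is ``firm $j$ is a good type priced at $\underline{P}$'', which raises it by exactly $m$; a bad-type firm $j$ priced $\ge c_{\bad}^{+}$ offers only $v-P_j\le h(v)-\underline{P}-m$, and a good-type firm $j$ priced $\ge\underline{P}+m$ offers at most $h(v)-\underline{P}-m$. So the expected benefit of learning is at most $m<c_{\ell}$, and by Definition~1(c) the consumer buys; the same comparison shows that every consumer who learns off a bad-type rival still switches to firm $i$. Since a unilateral deviation does not change what consumers believe about $\sigma_i^*$, the set of consumers who search is unchanged, so the only demand firm $i$ forfeits relative to price $\underline{P}$ comes from the $O(m)$ mass of consumers with $h(v)\in[\underline{P},\underline{P}+m)$, while $D_i(\underline{P}+m)\ge\tfrac12\big(1-F_v(h^{-1}(\underline{P}+m))\big)$ from the captive consumers alone.

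Finally I would close by showing $(\underline{P}+m)D_i(\underline{P}+m)>\underline{P}\,D_i(\underline{P})=\pi_{i\good}^*$, which contradicts the optimality of $\underline{P}$ for firm $i$'s good type — and, since it says exactly that $\underline{P}+m$ is not equilibrium-dominated for the good type, also validates the belief $\mu_i(\underline{P}+m)=1$ used above. Equivalently, $m\,D_i(\underline{P}+m)>\underline{P}\big(D_i(\underline{P})-D_i(\underline{P}+m)\big)$: the revenue gained by charging $m$ more on the consumers who still buy exceeds the revenue lost on those who drop out. This is precisely the content of the assumed strict monotonicity of the good type's monopoly profit $P\big(1-F_v(h^{-1}(P))\big)$ on $[0,c_{\bad}+m]$, used together with $m<c_{\ell}$ and $m<m^*$ (the latter entering through Lemma~\ref{lem:posprofit}).

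I expect the last step to be the main obstacle: making the demand bookkeeping precise enough. One has to track the demand firm $i$'s good type draws from a bad-type rival (which appears in both $D_i(\underline{P})$ and $D_i(\underline{P}+m)$) and the few searchers whose final purchase decision changes, and then match the \emph{pointwise} monopoly-monotonicity inequality against a demand expression that carries the factor $\tfrac12$ from the captive/non-captive split; for some equilibria this may force one to raise the good type's price past several grid points — to the lowest price above $\underline{P}$, still at most $c_{\bad}$, that no rival good type uses — rather than by a single $m$. A tempting shortcut, letting firm $i$'s good type mimic the bad type's equilibrium price $\widetilde{P}_i\ge c_{\bad}^{+}$ (where it would earn $c_{\bad}D_i(\widetilde{P}_i)$ more per unit than the bad type), does not yield the contradiction: the good type's own optimality already gives $\underline{P}\,D_i(\underline{P})=\pi_{i\good}^*\ge\widetilde{P}_iD_i(\widetilde{P}_i)$. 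The bite of the argument must come from the Intuitive Criterion improving an otherwise-unfavorable off-path belief.
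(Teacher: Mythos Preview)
Your approach is essentially the paper's: use Lemma~\ref{lem:posprofit} to get positive profits and belief~$1$ at every $P\le c_{\bad}$, take the minimal good-type price, and run the Diamond ``race to the top'' by deviating up by~$m$ and invoking the assumed monotonicity of $P[1-F_v(h^{-1}(P))]$ on $[0,c_{\bad}+m]$. The paper organises the last step by citing the profit formula~(\ref{Hprofit}) and Lemmas~\ref{lem:ell}--\ref{lem:Pgreatercb} from Appendix~\ref{sec:existenceproof}; your direct demand comparison is the same computation written out.

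Your stated obstacle is smaller than you fear, and you do not need to deviate past several grid points. The bookkeeping closes once you observe the following, which the paper makes explicit: a consumer initially at firm~$j$'s good type priced at $P_{j\good}\in\{\underline{P},\underline{P}+m\}$ never searches in equilibrium, because her expected gain from learning $i$'s price is at most $\mu_0 m<m<c_{\ell}$ (only $i$'s good type at $\underline{P}$ can beat her current surplus, and then only by~$m$). Consequently the only searchers from $j$ are (i) consumers at $j$'s bad type, who still strictly prefer $i$ at $\underline{P}+m$ since $h(v)-\underline{P}-m>v-P_{j}$ whenever $P_{j}\ge c_{\bad}^{+}$, and (ii) consumers at $j$'s good type priced $\ge\underline{P}+2m$, who likewise still choose $i$. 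Moreover the would-be marginal searchers with $h(v)\in[\underline{P},\underline{P}+m)$ do not search from $j$'s bad type either (their expected gain is below $\mu_0 m$), so the searcher component $S$ of $D_i$ is literally unchanged between $\underline{P}$ and $\underline{P}+m$. Hence $D_i(\underline{P})-D_i(\underline{P}+m)=\tfrac12\big[F_v(h^{-1}(\underline{P}+m))-F_v(h^{-1}(\underline{P}))\big]$ and the deviation payoff equals $\tfrac12(\underline{P}+m)[1-F_v(h^{-1}(\underline{P}+m))]+(\underline{P}+m)S$, which exceeds $\tfrac12\underline{P}[1-F_v(h^{-1}(\underline{P}))]+\underline{P}S=\pi_{i\good}^*$ by the monopoly-profit monotonicity plus $S\ge0$. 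That is the whole calculation; the hypothesis $m<m^*$ is used only through Lemma~\ref{lem:posprofit}.
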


The intuition for Lemma~\ref{lem:Hprice} is that the firms' good types are in a \emph{race to the top} at prices in $[0,c_{\bad})$.\footnote{A similar race occurs in  \cite{diamond1971} at all prices below the monopoly level.} Neither firm's good type loses customers to the other firm when raising price slightly, because the small price difference does not motivate the customers to pay the search cost. The reason that a good type does not increase price from $c_{\bad}$ to $c_{\bad}^{+}$ is that the bad type is choosing $c_{\bad}^{+}$, thus belief and demand are significantly lower at $c_{\bad}^{+}$.

In the unique equilibrium surviving the Intuitive Criterion, each good type sets price $c_{\bad}$ and each bad type $c_{\bad}^{+}$, as shown in the following Theorem. 
The proof provides the second component of the race to the bottom: a bad type reduces price to deter its customers from learning and to undercut the other firm's bad type. The motive for the customers to learn comes from the good types separating (the first component of the race, Lemma~\ref{lem:posprofit}), which makes the other firm's price informative and smaller in expectation than a bad type's price. 
\begin{thm}
\label{thm:unique}
For any $m< m^*$ and $i\in\set{\x,\y}$, in the unique equilibrium satisfying the Intuitive Criterion, $\sigma_i^*(\good)(c_{\bad})=1=\sigma_i^*(\bad)(c_{\bad}^{+})$. 
\end{thm}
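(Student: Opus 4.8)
The plan is to assemble the consequences of Lemmas~\ref{lem:D2}--\ref{lem:Hprice}, then pin down the bad types first, deduce the good types, and finally note that the guessed equilibrium realises these strategies and survives the Intuitive Criterion. Fix throughout an equilibrium satisfying the Intuitive Criterion with $m<m^*$. Since every price in the support of a type earns the same profit, which is strictly positive by Lemma~\ref{lem:posprofit}, every such price has strictly positive demand. A bad type never charges at or below $c_{\bad}$, since that earns $(P-c_{\bad})D_i(P)\le 0<\pi_{i\bad}^*$; hence its prices lie in $[c_{\bad}^{+},Nm]$, and Lemma~\ref{lem:Hprice} places the good type's prices in $[c_{\bad},Nm]$. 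Lemma~\ref{lem:D2}, together with positivity of demand, yields $\max\supp\sigma_i^*(\good)\le\min\supp\sigma_i^*(\bad)$ for $i\in\set{\x,\y}$.

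The heart of the argument is that the bad types pool on $c_{\bad}^{+}$. Let $Q$ be the highest price used by any bad type and suppose $Q>c_{\bad}^{+}$, say $\sigma_i^*(\bad)(Q)>0$. First, no good type uses $Q$: if $\sigma_k^*(\good)(Q)>0$, the support ordering forces $\supp\sigma_k^*(\bad)=\set{Q}$, contradicting the bad-only price $\ge c_{\bad}^{+}$ that Lemma~\ref{lem:posprofit} furnishes for firm $k$; hence the posterior at $Q$ is $0$. Next I would show that a consumer who sees $Q$ strictly prefers to learn: bounding the value of learning below by always buying from the rival afterwards and using the tower property $\sum_{P_j}\mu_j(P_j)[\mu_0\sigma_j^*(\good)(P_j)+(1-\mu_0)\sigma_j^*(\bad)(P_j)]=\mu_0$, the net gain is at least $\mu_0(h(v)-v)+Q-E[P_j]-c_{\ell}$; the rival's good type prices below $Q$ so $Q-E[P_j]\ge\mu_0 m$, and a consumer for whom buying at $Q$ is attractive has $v>c_{\bad}^{+}$, so $h'>1$ gives $\mu_0(h(v)-v)>\mu_0(h(c_{\bad}^{+})-c_{\bad}^{+})$, whence the gain exceeds $\mu_0(h(c_{\bad}^{+})-c_{\bad})-c_{\ell}\ge 0$. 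Consequently firm $i$'s bad type at $Q$ keeps a consumer only when the rival is also a bad type charging $Q$ --- a tie broken toward the consumer's own firm --- so $D_i(Q)>0$ forces $\sigma_j^*(\bad)(Q)>0$ as well. But then deviating to $Q-m$ strictly raises firm $i$'s bad-type profit: against the rival's bad type at $Q$ it now undercuts rather than ties, capturing the rival's learners as switchers in addition to its own consumers, while the grid makes the revenue loss only $m$ per inframarginal unit, a net gain once $m<m^*$. This contradiction gives $Q=c_{\bad}^{+}$.

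Given that bad types pool on $c_{\bad}^{+}$, the support ordering and Lemma~\ref{lem:Hprice} confine good-type prices to $\set{c_{\bad},c_{\bad}^{+}}$; the bad-only price from Lemma~\ref{lem:posprofit} must then be $c_{\bad}^{+}$ itself, so $\sigma_i^*(\good)(c_{\bad}^{+})=0$ and hence $\sigma_i^*(\good)(c_{\bad})=1$. These are exactly the firm strategies of the guessed equilibrium of Section~\ref{sec:existence}, verified in Appendix~\ref{sec:existenceproof} to be a PBE; it survives the Intuitive Criterion because every price $P<c_{\bad}$ is equilibrium-dominated for the bad type ($(P-c_{\bad})D_i(P)\le0<mD_i(c_{\bad}^{+})=\pi_{i\bad}^*$), so the posterior $1$ there is admissible and --- a short computation --- does not tempt the good type to deviate either (this is the content of Lemma~\ref{lem:Hprice}), while no grid price lies strictly between $c_{\bad}$ and $c_{\bad}^{+}$ and prices above $c_{\bad}^{+}$ may be assigned posterior $0$.

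The main obstacle is the bad-type step. One must convert the merely qualitative separation of Lemma~\ref{lem:posprofit} into the quantitative claim that consumers at a believed-bad price strictly prefer to learn --- this is where the hypothesis $c_{\ell}\le\mu_0(h(c_{\bad}^{+})-c_{\bad})$ is used sharply, in tandem with $h'>1$ and the grid size $m$ --- and, more delicately, one must control the posterior at the undercutting price $Q-m$ so that the bad type's downward deviation is unambiguously profitable regardless of whether $Q-m$ is on or off the equilibrium path. This is precisely the work done by the Intuitive Criterion: it removes the unfavourable off-path belief threats that would otherwise let a bad type sustain a price strictly above $c_{\bad}^{+}$.
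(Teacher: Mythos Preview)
Your proof follows essentially the same route as the paper's: use Lemmas~\ref{lem:posprofit} and~\ref{lem:Hprice} to locate the supports, show that every consumer facing the highest bad-type price learns (via the assumption $c_{\ell}\le\mu_0(h(c_{\bad}^{+})-c_{\bad})$ and the tower identity), then run a Bertrand undercutting argument to force the bad types down to $c_{\bad}^{+}$, and finally read off $\sigma_i^*(\good)(c_{\bad})=1$ from the support ordering and Lemma~\ref{lem:Hprice}. Your organisation around the single quantity $Q=\max\{\text{bad prices}\}$ is a mild streamlining of the paper's two per-firm maxima $P_i,P_j$, but the content is the same.

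Two points in your final paragraph are miscast and would trip you up if you wrote the argument out in full. First, the undercutting step does \emph{not} use $m<m^*$; that hypothesis is spent in Lemma~\ref{lem:posprofit}. What makes the deviation $Q\to Q-m$ profitable is $Q>c_{\bad}^{+}$, which gives $Q-c_{\bad}\ge 2m$ and hence $Q-c_{\bad}\le 2(Q-m-c_{\bad})$; combined with the (w.l.o.g.)\ choice of the firm with the smaller tied demand, the doubling of demand from capturing the rival's learners outweighs the margin loss. Second, and more importantly, the Intuitive Criterion does \emph{not} control the posterior at $Q-m$. That price lies weakly above $c_{\bad}^{+}$, so it is not equilibrium-dominated for the bad type, and the Criterion gives you nothing there. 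The correct resolution is trivial: whatever the equilibrium belief at $Q-m$, it satisfies $\mu_i(Q-m)\ge 0=\mu_i(Q)$, so the deviating bad type faces a weakly higher belief at a strictly lower price, which is all the undercutting argument needs. If you tried to invoke the Criterion at $Q-m$ as you propose, you would find it inapplicable and be stuck; the paper simply uses $\mu_i(P_d)\ge\mu_i(P_i)$ without further comment.
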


Theorem~\ref{thm:unique} shows that the unique equilibrium outcome 
that satisfies the Intuitive Criterion is the guessed equilibrium from above. Prices are close to competitive and there is positive, but small price dispersion. The equilibrium outcome is robust to changing the prior, the learning cost, the distribution of consumer valuations and the good type's cost in a range of parameters\footnote{The range defined by $h(v)\geq v$, $h'>1$, $h(\overline{v})<\infty$,  $0<m<\min\{c_{\ell},\;\overline{v}-c_{\bad},\;\frac{\mu_0 c_{\bad}}{1-\mu_0},\;m^*\}$, $c_{\ell}\leq \mu_0 (h(c_{\bad}^{+})-c_{\bad})$, $c_{\bad}=km\geq h(0)-m>0$ for some $k\in\mathbb{N}$, $\frac{d}{dP}P[1-F_v(h^{-1}(P))]>0$ for $P\in[0,c_{\bad}+m]$. 
} 
(Section~\ref{sec:robust} discusses cases outside that range). 
The equilibrium in Theorem~\ref{thm:unique} is distinct from signalling by a monopoly, because a bad type monopolist does not have an incentive to cut price when the good type's price is low enough. This is because there is no competing firm for the customers to learn about and leave to. The bad type sets its monopoly price. Under the Intuitive Criterion, Lemmas~\ref{lem:D2}--\ref{lem:Hprice} still apply, so the good type monopolist sets a price between $c_{\bad}$ and $P_{\good}^{m}$. Separation from the bad type usually requires the good type's price to be strictly below $P_{\good}^{m}$, so unobservable type has some of the same pro-competitive effect with one firm as with two. However, more than one firm is needed for both types' prices to be close to competitive. 

Section~\ref{sec:completeinfo} below contrasts Theorem~\ref{thm:unique} with competition when the type is observed together with the price. The comparisons to monopoly and observed type show that the combination of signalling and multiple firms is necessary as well as sufficient to overcome the effect of the positive search cost. 

Bertrand competition under zero learning cost between two known bad or two known good types leads to equal profits (close to zero) for the firms and no price dispersion, unlike in the equilibrium in Theorem~\ref{thm:unique}. Bertrand competition between a good and a bad firm yields zero demand for the bad firm, but positive demand and profit for the good firm, which sets a strictly higher price than the bad. This differs from the outcome in Theorem~\ref{thm:unique} where a firm that sets a strictly lower price is preferred by the consumers and gets greater demand and profit. 

If some consumers have zero and others positive learning cost, but there is no quality or cost uncertainty, then the firms mix over an interval of prices between the competitive and the monopoly price. The price distribution depends strongly on the density of the learning costs at zero, and whether there is an atom at zero. In the current paper, each firm sets a single price and the equilibrium is robust to perturbing the parameters within a range. 


With consumer taste shocks (horizontal differentiation of firms), there is no price dispersion, and for each firm, some consumers initially at it learn another firm's price and leave. This differs from the current paper, which models vertical differentiation and shows that consumers initially at a good firm do not learn or leave. 

Models of repeat purchases have many equilibria, some of which replicate the pricing patterns found in this paper. However, the markets described by repeated games with high discount factors differ from the markets studied in this paper, which involve infrequent buying (repair services, insurance, durable goods such as cars) and are thus closer to one-shot interactions. 

The next section relaxes some of the assumptions made above. The equilibrium remains qualitatively similar, in particular the Diamond paradox is still resolved.

\section{Robustness}
\label{sec:robust}

Relaxing the assumption that the full-information monopoly price $P_{\good}^{m}$ of the good type is above the cost of the bad type, 
the equilibrium price of the good type is either $c_{\bad}$ as above (if $P_{\good}^{m}= c_{\bad}$), or $P_{\good}^{m}<c_{\bad}$. In the latter case, the only modification of the equilibrium in Section~\ref{sec:existence} is that $\good$ sets price $P_{\good}^{m}\in(0,c_{\bad})$. The proofs simplify, because $\good$ fully separates. 

If the learning cost is large enough ($c_{\ell}> \mu_0[h(c_{\bad}^{+})-c_{\bad}]$), then some customers initially at a bad type setting price $c_{\bad}^{+}$ buy immediately instead of learning the other firm's price. These customers are called \emph{captive}.\footnote{The captive customers correspond to the uninformed customers in \cite{varian1980}.} The mass of captive customers depends on $c_{\ell}- \mu_0[h(c_{\bad}^{+})-c_{\bad}]$. If this is large, then the bad type sets price $P>c_{\bad}^{+}$ with positive probability, because extracting more revenue from captive customers outweighs losing some non-captive ones to the competitor. The probability that the bad type puts on $P>c_{\bad}^{+}$ and the maximal $\hat{P}$ with $\sigma_i(\bad)(\hat{P})>0$ increase in the mass of captive customers. As $c_{\ell}- \mu_0[h(c_{\bad}^{+})-c_{\bad}]$ increases, eventually the good type starts putting positive probability on $c_{\bad}^{+}$. The qualitative features of the model are preserved, in that price is lower than with complete information, and there is price dispersion. 

If there is a distribution of learning costs with $\min c_{\ell}>m$ and $\max c_{\ell}\leq \mu_0[h(c_{\bad}^{+})-c_{\bad}]$, then the equilibrium outcome is unchanged. Learning costs strictly greater than $\mu_0[h(c_{\bad}^{+})-c_{\bad}]$ create captive consumers, as discussed above. 

Nonpositive learning costs for some consumers eliminate the Diamond paradox even without incomplete information, as the previous literature showed. In the current model, enough consumers with a nonpositive learning cost make the good types reduce price, but the positive probability of the other firm having a bad type ensures that the good types do not reach zero price (their marginal cost). The customers initially at a bad type are captive for the other firm's good type. 

If all consumers buy at price $c_{\bad}^{+}$ and belief $\mu_0$ (formally, $h(0)\geq c_{\bad}^{+}/\mu_0$), then there is no reason for a good type to reduce price below $c_{\bad}^{+}$ to increase belief. Both firms pooling on $P_0:=\max\{P\in S_P:P\leq \mu_0h(0)\}$ survives the Intuitive Criterion, because if belief at any $P_1>P_0$ is set to $1$ and the good type wants to deviate to $P_1$, then the bad type also wants to deviate. If the bad, but not the good type wants to deviate to a price, then the Intuitive Criterion sets belief at that price to $0$, which deters deviations. 


The results remain unchanged if the tie-breaking rule for $\sigma_2(v,P_i,P_j)$ in Definition~\ref{def:mix} depends on the belief or the price, e.g.\ if $\mu_{\x}(P_{\x})h(v)+(1-\mu_{\x}(P_{\x}))v-P_{\x} =\mu_{\y}(P_{\y})h(v)+(1-\mu_{\y}(P_{\y}))v-P_{\y}$, then the customer buys from the firm with the greater $\mu_i(P_i)$ (or smaller $P_i$) with probability $p\in[0,1]$. The results also do not change if ties are always broken in favour of a particular firm, say $\x$. 
A slight change in equilibrium is possible if the tie-breaking rule can depend on both the price and the firm, e.g.\ if both firms set $P=c_{\bad}+2m$, then ties are broken in favour of $\x$, but if both set $P=c_{\bad}^{+}$, then in favour of $\y$. In this case, the equilibrium features $\sigma_{\x}(\bad)(c_{\bad}+2m)=1 =\sigma_{\y}(\bad)(c_{\bad}^{+})$. Firm $\y$'s type $\bad$ has no incentive to raise price, because then it would lose all customers. If $\x$ cuts price to $c_{\bad}^{+}$, it still gets zero demand. 
The price at which trade occurs when both firms are of type $\bad$ is still $c_{\bad}^{+}$. Other parts of the equilibrium are unchanged. 

A small asymmetry between firms has a similar effect to asymmetric tie-breaking. Denote type $\theta$ of firm $i$ by $i\theta $. If consumers slightly prefer $\x\bad $ to $\y\bad$, other things equal (interpreted as $\y\bad$ having lower quality), then $\y\bad $ gets zero demand and profit at equal price to $\x\bad $, because consumers at $i\bad$ learn. Then $\x\bad $ sets either the same price as $\y\bad $, or higher by just enough to deter consumers from switching to $\y\bad $. 
Consumers initially at a good type do not learn, unless the quality is lower or price higher than that expected from the other firm's good type, and the difference multiplied by the prior outweighs the learning cost. If consumers do not learn, then they cannot switch firms, so both firms' good types set price $c_{\bad}$, as before. 
Now suppose the firms have the same quality, but the costs satisfy $0=c_{\x\good }\leq c_{\y\good}\leq c_{\x\bad } \leq c_{\y\bad}-m$, and the full-information monopoly price of $\x\theta$ is above $c_{\y\theta}$. Then $\x\bad$ sets price $c_{\y\bad}$, because all consumers at $\x\bad $ learn, so there is asymmetric Bertrand competition between $\x\bad $ and $\y\bad$. Consumers at $\x\good $ do not learn, so the price of $\x\good$ is at least $c_{\y\good}$. The good types are in a race to the top, as in Section~\ref{sec:existence}, so the good types set price $c_{\x\bad}$. 

If the firms can set any price in $[c_{\bad}-\rho,c_{\bad}+\rho]$ for some $\rho>0$ (not constrained to a grid), then 
an equilibrium satisfying the Intuitive Criterion does not exist. The proofs of Lemmas~\ref{lem:D2}--\ref{lem:Hprice} still work, but in Theorem~\ref{thm:unique}, the bad types Bertrand compete down to price $c_{\bad}$. Then belief at $c_{\bad}$ is strictly lower than $1$, the belief at any $P<c_{\bad}$. This makes the payoff of a good type drop discontinuously at $c_{\bad}$, so a best response of a good type does not exist. 
Without refining with the Intuitive Criterion, 
equilibria exist, e.g.\ pooling on $c_{\bad}+\epsilon$ for $\epsilon\in(0,\rho)$ small. This is supported by zero belief for any $P\neq c_{\bad}+\epsilon$. 


Having more than two firms only strengthens competition. Because the bad types do not set the weakly dominated price $c_{\bad}$, and consumers initially at the good types do not learn, pricing cannot get more competitive than with two firms. The outcome is the same as in Section~\ref{sec:existence}. 

More than two types (with higher quality implying lower cost) are conceptually similar to two, but notationally cumbersome. The worst type (highest cost, lowest quality) behaves like $\bad$. In particular, the worst types undercut each other in Bertrand fashion, until they price $m$ above their cost. Consumers initially facing the worst type's price learn the price of the other firm, hoping to meet a better type with a lower price. The Intuitive Criterion imposes (partial) separation of types, so the gain from learning is positive. 
The best type acts similarly to $\good$, setting a price equal to the second-best type's cost. The reason is a race to the top among the best types, as in the baseline model.
Types other than the worst and the best set prices between the second-best and the worst type's cost and may mix, because customers who switch away from the worst type of the other firm are captive for types other than the worst. 

Two-dimensional types with combinations of cost and quality $(c_{\good},\hat{q}_{\good})$, $(c_{\good},\hat{q}_{\bad})$, $(c_{\bad},\hat{q}_{\good})$ and $(c_{\bad},\hat{q}_{\bad})$ are similar to the two-type case when cost and quality are negatively correlated. A type $(c_{\theta},\hat{q}_{\good})$ cannot separate from $(c_{\theta},\hat{q}_{\bad})$ for any $\theta\in\set{\good,\bad}$ in any equilibrium, because $(c_{\theta},\hat{q}_{\bad})$ can imitate any pricing strategy of $(c_{\theta},\hat{q}_{\good})$. The type $(c_{\theta},\hat{q}_{\bad})$ strictly prefers to imitate and get exactly the same payoff as $(c_{\theta},\hat{q}_{\good})$, because demand is based on the quality that the consumers expect, given a price. Demand is thus greater at prices set by $(c_{\theta},q_{\good})$. 
The model with multidimensional types and negative correlation of cost and quality thus reduces to the two-type model in Section~\ref{sec:setup}, with $q_{\theta} =\hat{q}_{\good}\Pr(\hat{q}_{\good}|c_{\theta}) +\hat{q}_{\bad}\Pr(\hat{q}_{\bad}|c_{\theta})$ for $\theta\in\set{\good,\bad}$. 

If the correlation of cost and quality is positive, then the four-type model reduces to the case of two types with higher cost implying higher quality. Price signalling is then directed upward (the high-quality type sets a higher price). The race to the bottom does not occur. Each type sets a price weakly greater than its monopoly price. 

If the correlation of cost and quality is zero, then signalling is impossible in either direction. Consumers expect the average quality after each price set in equilibrium and each type of firm sets its monopoly price given the expected quality. 

Suppose that the firms can advertise as well as signal by price. 
If ads reveal prices to some consumers, then competition increases and the good types cut prices below $c_{\bad}$. The bad types still set price $c_{\bad}^{+}$. If all consumers see both firms' prices, then the good types Bertrand compete to price $m$. 

If ads do not reveal prices, but are just wasteful signalling which for some reason is cheaper for the good type, then the results depend on the noisiness,  timing and cost of the ads. If consumers cannot see the advertising expenditure, but must infer it from noisily observed ad quality and quantity, then ads seen before the prices only change the prior. The results are unaffected by the prior $\mu_0$ if $\mu_0>\max\{\frac{m}{m+c_{\bad}},\; \frac{c_{\ell}}{h(c_{\bad}^{+})-c_{\bad}}\}$. 
Ads seen after the prices have no effect, because the prices already reveal the types. Even if ads are free for the good type, the good type still signals by price, because ads are noisy, so revealing the type via price discretely increases demand. 

Suppose that ads are perfect signals of the money spent on them. Then the relative cost to the types per unit of ads vs per unit of price decrease determines which signalling channel the good type uses. If revealing the type via ads is relatively cheaper, then the good type sets its full-information monopoly price and signals using ads. If the ad costs for the types are similar relative to the difference between the profits lost by cutting price, then ads are not used and the outcome is the equilibrium found above. A similar reasoning applies to any other way to signal, e.g.\ warranties, hiring independent quality testers, etc. 

If each firm trembles when setting price, and prices are the only way to signal, then the results depend on the trembles. 
Denote by $\Pr(P_1|P_2)$ the probability that the consumers see price $P_1$ when the firm tries to set $P_2$. A natural benchmark has $\Pr(P_1|P_2)$ strictly decreasing in $|P_1-P_2|$, and $\Pr(P_1|P_2)>0$ for all $P_1,P_2\in S_P$. 
Reasoning similar to Lemma~\ref{lem:D2} shows that in any equilibrium, the good type tries to set a lower price than the bad. Pooling cannot occur, because then the posterior belief equals the prior at every price, motivating the good type to set a strictly smaller price than the bad. If the trembles are small enough, i.e.\ $\Pr(P|P)\approx 1$ for all $P$, then the distinct prices of the types motivate the consumers to learn. This starts the race to the bottom discussed in Section~\ref{sec:existence}, leading to the same outcome. 

\subsection{Comparison to observable types}
\label{sec:completeinfo}

In this section, the only difference from Section~\ref{sec:setup} is that the type is not inferred from the price, but seen directly. The consumers initially at firm $i$ see the price and type of firm $i$, but have to pay $c_{\ell}$ to learn the price and type of firm $j$. In such a market, prices are not competitive, as shown below.
The equilibrium definition omits part (g) of Definition~\ref{def:mix} and replaces $\mu_i(P_i)$ with $1$ if firm $i$ is of type $\good$ and $0$ if $\bad$. 
The following Proposition puts a lower bound on the price of type $\good$. 
\begin{prop}
\label{prop:complete}
In any equilibrium with observable types, $\pi_{i\good}^*>0$, and if $\sigma_i(\good)(P)>0$, then $P\geq \min\{P_{\good}^m,\; h(c_{\bad}^{+})-m\}$ for $i\in\set{\x,\y}$. 
\end{prop}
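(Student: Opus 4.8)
The plan is to establish the two assertions separately. Both rest on the observation that the good type of a firm charging a sufficiently low price faces essentially its full-information monopoly demand, because none of the consumers initially assigned to it can gain enough from learning the rival's price to make the learning cost worthwhile. Throughout write $\underline{P}:=\min\set{P_{\good}^{m},\,h(c_{\bad}^{+})-m}$.

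For $\pi_{i\good}^{*}>0$: I would show that the good type of firm $i$ can guarantee positive profit by deviating to the price $m$. A consumer initially at $i$ then obtains gross surplus $h(v)-m$, which is positive on a set of positive mass since $h(\overline{v})\geq\overline{v}>c_{\bad}>m$. Because every feasible price is at least $0$, the most such a consumer can save by learning the rival's price is $m$ if the rival is good, and even less once the quality premium $h(v)-v\geq h(0)$ is netted out if the rival is bad; hence the expected gain from learning is at most $m<c_{\ell}$ and nobody learns. Demand at $m$ is therefore at least $\frac{1}{2}[1-F_{v}(h^{-1}(m))]>0$, and since a type's equilibrium payoff is weakly above any deviation payoff, $\pi_{i\good}^{*}\geq m\cdot\frac{1}{2}[1-F_{v}(h^{-1}(m))]>0$.

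For the price bound, suppose toward a contradiction that some good type charges a price strictly below $\underline{P}$ with positive probability, and let $P^{*}$ be the lowest such price, realized without loss of generality by firm $\x$. A preliminary step rules out the bad type ever charging below $c_{\bad}$ (such a price is weakly dominated by a price high enough that demand is zero, and strictly worse whenever it would attract any demand), so every bad-type price is at least $c_{\bad}$. The key lemma is then that at $P^{*}$ no consumer initially at the good type of $\x$ who would buy at $P^{*}$ ever learns and switches away: switching to a good rival is pointless because the rival's good type charges at least $P^{*}$; switching to a bad rival at price $\geq c_{\bad}$ would require $h(v)-v<P^{*}-c_{\bad}$, hence $v<c_{\bad}^{+}$ because $P^{*}<h(c_{\bad}^{+})-m$; but actually paying $c_{\ell}$ to learn requires the prospective gain to reach $c_{\ell}$, and the restrictions $c_{\ell}>m$ and $c_{\ell}\leq\mu_{0}(h(c_{\bad}^{+})-c_{\bad})$ force any such consumer's valuation so low that $h(v)<P^{*}$, contradicting that she buys. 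Consequently $D_{\x}(P^{*})\geq\frac{1}{2}[1-F_{v}(h^{-1}(P^{*}))]$, and the same reasoning gives $D_{\x}(P)\geq\frac{1}{2}[1-F_{v}(h^{-1}(P))]$ for every $P\leq\underline{P}$.

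It remains to show the good type of $\x$ strictly prefers some price in $(P^{*},\underline{P}]$ to $P^{*}$. On $[0,\underline{P}]$ the full-information monopoly profit $P[1-F_{v}(h^{-1}(P))]$ is weakly increasing, which is where $P_{\good}^{m}$ enters the bound, while the additional demand forgone by raising the good type's price by $m$ is of order $m$: the order-$m$ mass of own consumers priced out, plus the consumers initially at a bad rival who switched in at $P^{*}$ but, the good firm now being $m$ dearer, no longer do; this latter mass is order $m$ because $f_{v}$ is bounded and $h'-1$ is bounded away from $0$ on the compact $[0,\overline{v}]$. Since $D_{\x}(P^{*})$ is bounded below by a positive constant, for $m$ small the per-unit margin gain $m$ dominates the order-$m$ demand loss, the deviation is strictly profitable, and we reach a contradiction. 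The step I expect to be the main obstacle is exactly this last one: bounding the demand a good firm loses when it raises its price, and in particular showing that the inflow of consumers initially at the bad competitor shrinks only by a second-order amount when the good firm becomes dearer; a secondary nuisance is ensuring that off-path prices below $c_{\bad}$ for the bad type cannot corrupt the key lemma.
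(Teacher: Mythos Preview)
Your approach follows the same line as the paper's: both argue a ``race to the top'' for the good types by showing that, at any candidate minimum good-type price $P_{i\good}$ below $\underline{P}$, the good type's own consumers have no reason to learn, so the good type can profitably raise its price. Your first part ($\pi_{i\good}^*>0$ via the deviation to price $m$) is exactly the paper's one-line argument.

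For the price bound, the paper works directly at the deviation price $P_{i\good}+m$: it bounds the value of learning for a consumer initially at a type-$\good$ firm charging $P_{i\good}+m$, using $P_{j\good}\geq P_{i\good}$ and $P_{j\bad}\geq c_{\bad}^{+}$ (weak dominance gives the slightly sharper $c_{\bad}^{+}$ rather than your $c_{\bad}$), and derives the sufficient condition $P_{i\good}+m<h\bigl(c_{\bad}^{+}+(c_{\ell}-\mu_{0}m)/(1-\mu_{0})\bigr)$, which is implied by $P_{i\good}<h(c_{\bad}^{+})-m$. That is your ``key lemma'' in slightly different packaging; your argument for it can indeed be made rigorous along the lines you sketch (combining the learning inequality $h(v)-v\leq P^{*}-c_{\bad}-c_{\ell}/(1-\mu_{0})$ with $v<c_{\bad}^{+}$ yields $h(v)<P^{*}$).

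Where the two diverge is the final step. The paper, after establishing that own consumers do not learn at $P_{i\good}+m$, simply writes ``So type $\good$ of firm $i$ increases price to at least $\min\{P_{\good}^{m},\,h(c_{\bad}^{+})-m\}$'' and stops. It does not discuss the inflow of consumers from firm $j$ who learned and switched to $i$, nor whether that inflow shrinks when $i$'s price rises. The issue you correctly flag as ``the main obstacle'' is therefore not resolved in the paper either; the paper treats it as understood. Your order-$m$ sketch for bounding that inflow loss is more explicit than anything in the paper, though as you note it still needs a careful argument (and, strictly speaking, some control of the size of $m$ that the proposition does not assume, and an implicit use of monotonicity of $P\mapsto P[1-F_v(h^{-1}(P))]$ beyond the interval $[0,c_{\bad}^{+}]$ on which it is assumed).
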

The idea for Proposition~\ref{prop:complete} is that race to the top between the good types now continues at prices above $c_{\bad}$, as long as the profit increases in the price and consumers initially at a good type do not learn. If the consumers learn, then with positive probability they switch to the other firm (otherwise there would be no reason to pay the learning cost) and the good type loses demand. The prices of the good types stay close to each other throughout the race to the top, so the motive for a consumer to learn is to find a bad type of the other firm at a price low enough to compensate for the quality difference and the learning cost. So the good types can price above $c_{\bad}^{+}$ by at least 
the quality difference plus the learning cost. 

The race to the top may end at the good type's monopoly price or below that. If the race ends below $P_{\good}^{m}$, then consumers initially at a good type learn and switch with positive probability. The bad type then gets positive demand, even when pricing above the other firm's bad type. The captive customers of the bad type then motivate it to raise price above $c_{\bad}^{+}$. 
%
%
In summary, if quality is seen together with the price, then 
either the good type sets its monopoly price or both types set a price strictly greater than with unobservable types.

\section{Conclusion}
\label{sec:conclusion}

The famous paradox of \cite{diamond1971} is that a market with multiple firms need not be competitive if consumers have to pay a cost to learn the prices of firms. However, as shown in the current paper, negatively correlated production cost and quality that are private information restore competitive pricing. This result holds for a wide range of quality and cost differences between firms. 
There are several mechanisms that make cost and quality negatively correlated across firms, for example economies of scale, regulation or differing managerial talent. These mechanisms operate in many markets. Private information about cost and quality, as well as prices close to the competitive level are empirically reasonable in skilled services, insurance and durable goods, among others. 

The previous literature resolves the Diamond paradox assuming either (a) zero learning cost for a positive fraction of consumers, (b) that consumers observe multiple prices at once, (c) large private taste shocks, or (d) repeat purchases. The current paper models markets in which a given consumer purchases rarely, e.g.\ cars, insurance, repair services, and in which the vertical quality difference is more important than the horizontal taste shock. 
The predictions of the current paper differ from zero search costs and observing multiple prices at once, because the firms set deterministic prices instead of mixing, and the mark-up and profit are larger for a lower-price firm. The current paper assumes no repeat buying of the same good (insurance policies and car models change by the time the consumer purchases a replacement), which distinguishes the model from the literature on repeat purchases. With taste shocks, prices decrease in the number of firms and the degree of product differentiation. 
In the current paper, prices stay constant when the number of firms rises above two or when the quality difference changes within some bounds. 

If lower cost implies higher quality, then a low-cost firm would like to tell consumers about its cost level. A cheap talk message about low cost does not work, for the same reason as cheap talk about high quality has little effect. On the other hand, a low price is a credible signal, because it is differentially costly to the firm types. In some markets, other costly signals are available, e.g.\ warranties or advertising. 
In other applications like insurance, warranties are uncommon, so price signalling is more likely. Even if feasible, signalling by ads or warranties may not be optimal, for example when price signals are cheaper or more precise. 

Signalling by a low price resembles limit pricing, in which an incumbent tries to keep an entrant out of the market. The incumbent sets a low price to convince the entrant that the incumbent has a low cost and is likely to start a price war. The low price in limit pricing is anti-competitive. In the current work, the low price results from competition, thus has different policy implications. A regulator maximising total or consumer surplus should encourage the race to the bottom in prices, for example by punishing low quality or checking the quality of a firm with a larger market share more frequently.


\appendix
\section{Verification of the guessed equilibrium}
\label{sec:existenceproof}

Consumers are clearly best responding to their beliefs in parts 3--5 of the guessed equilibrium. Beliefs in part 1 are consistent with part 2. It remains to check whether firms are best responding in part 2. 
First, downward deviations of type $\good$ are ruled out. The preparative  Lemma~\ref{lem:devprofit} derives the profit function of $\good$ from setting $P\leq c_{\bad}$. 
\begin{lemma}
\label{lem:devprofit}
In the guessed equilibrium, the profit of a type $\good$ firm from $P\leq c_{\bad}$ is 
\begin{align}
\label{Hprofit}
\frac{1}{2}P\left[1-F_v(h^{-1}(P))+(1-\mu_0)\int_{h^{-1}(P)}^{\overline{v}}\sigma_1^*(v,c_{\bad}^{+})(\ell)dF_v(v)\right]. 
\end{align}
\end{lemma}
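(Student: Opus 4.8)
The plan is to substitute the guessed-equilibrium strategies into the demand expression~(\ref{demand}) and evaluate its three terms one at a time. Fix the firm under consideration, say $i$, contemplating a price $P\le c_{\bad}$ while firm $j\ne i$ follows the guessed equilibrium; then $j$ charges $c_{\bad}$ with probability $\mu_0$ (its good type) and $c_{\bad}^{+}$ with probability $1-\mu_0$ (its bad type), so the sum over $P_j$ in~(\ref{demand}) reduces to the two atoms $P_j\in\{c_{\bad},c_{\bad}^{+}\}$ with weights $\mu_0$ and $1-\mu_0$.

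For the consumers initially at $i$: since $P\le c_{\bad}$, part~1 of the guessed equilibrium gives $\mu_i(P)=1$, hence $\w(v,i,P)=h(v)-P$, and part~3 gives $\sigma_1^*(v,P)(b)=1$ exactly when $h(v)\ge P$, while $\sigma_1^*(v,P)(\ell)=0$ for every $v$. Thus the middle term of~(\ref{demand}) vanishes, and the first term---after the $j$-weights sum to $1$---contributes $\tfrac12\bigl(1-F_v(h^{-1}(P))\bigr)$. For the third term (consumers initially at $j$ who learn and then switch to $i$), the atom $P_j=c_{\bad}$ contributes nothing, since $\mu_j(c_{\bad})=1$ forces $\sigma_1^*(v,c_{\bad})(\ell)=0$ by part~3. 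For the atom $P_j=c_{\bad}^{+}$ we have $\mu_j(c_{\bad}^{+})=0$, so $\w(v,j,c_{\bad}^{+})=v-c_{\bad}^{+}$, and by part~4 such a consumer learns iff $\mu_0(h(v)-c_{\bad})+(1-\mu_0)(v-c_{\bad}^{+})\ge c_{\ell}$. The key claim is that every such learner switches to $i$, i.e.\ $\sigma_2^*(v,c_{\bad}^{+},P)(b_i)=1$: by Definition~\ref{def:mix}(a) this requires $\w(v,i,P)\ge 0$ together with $\w(v,i,P)>\w(v,j,c_{\bad}^{+})$, i.e.\ $h(v)-P\ge 0$ and $h(v)-P>v-c_{\bad}^{+}$. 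The strict inequality holds always because $h(v)-v\ge 0>P-c_{\bad}^{+}$ (using $P\le c_{\bad}<c_{\bad}^{+}$), and $h(v)\ge P$ holds because any learner has $h(v)\ge c_{\bad}$: if instead $h(v)<c_{\bad}$, then $h$ increasing together with $h(v)\ge v$ forces $v<c_{\bad}<c_{\bad}^{+}$, so $\mu_0(h(v)-c_{\bad})+(1-\mu_0)(v-c_{\bad}^{+})<0<c_{\ell}$, contradicting that the consumer learns. That same bound $h(v)\ge c_{\bad}\ge P$ for learners shows $\sigma_1^*(v,c_{\bad}^{+})(\ell)=0$ whenever $v<h^{-1}(P)$, so the third term equals $\tfrac12(1-\mu_0)\int_{h^{-1}(P)}^{\overline v}\sigma_1^*(v,c_{\bad}^{+})(\ell)\,dF_v(v)$.

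Collecting the three contributions gives $D_i(P)=\tfrac12\bigl[1-F_v(h^{-1}(P))+(1-\mu_0)\int_{h^{-1}(P)}^{\overline v}\sigma_1^*(v,c_{\bad}^{+})(\ell)\,dF_v(v)\bigr]$, and multiplying by $P-c_{\good}=P$ (the good type's marginal cost is normalised to $0$) yields the stated profit~(\ref{Hprofit}). There is no real obstacle: this is bookkeeping with~(\ref{demand}). The one step that needs a moment's care is the claim that every learner at the bad firm strictly prefers the deviator, which rests only on the elementary sign facts $h(v)\ge v$, $P\le c_{\bad}<c_{\bad}^{+}$, and $h(v)\ge c_{\bad}$ for learners.
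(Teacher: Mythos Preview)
Your proof is correct and follows the same approach as the paper: plug the guessed-equilibrium strategies into the demand expression~(\ref{demand}) and evaluate term by term. You are somewhat more explicit than the paper in verifying that every learner at $j$'s bad type actually switches to $i$ (the paper simply asserts ``then buys if $v\geq h^{-1}(P_i)$''), but the underlying argument is the same.
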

\begin{proof}
The profit~(\ref{Hprofit}) is derived from~(\ref{demand}) by substituting in the consumers' strategies in the guessed equilibrium: $\sigma_1^*(v,P_i)(b)=1$ and $\sigma_1^*(v,P_i)(\ell)=0$ for consumers initially at $i$, because $P_i\leq c_{\bad}$ and $\mu_i(P_i)=1$. 
Consumers with $v\geq h^{-1}(P_i)$ buy from $i$, and they are a fraction $1-F_v(h^{-1}(P_i))$ of the mass of consumers initially at $i$. 

If firm $j$ is type $\good$, then $\sigma_1^*(v,P_j)(\ell)=0$. 
With probability $1-\mu_0$, firm $j$ is type $\bad$, in which case consumer $v$ at firm $j$ learns $P_i$ with probability $\sigma_1^*(v,c_{\bad}^{+})(\ell)$ and then buys if $v\geq h^{-1}(P_i)$.
\end{proof}
Next, the technical Lemma~\ref{lem:ell} simplifies~(\ref{Hprofit}) by showing that if $\sigma_i^*(\bad)(c_{\bad}^{+})=1$ and $\sigma_i^*(\good)(c_{\bad})=1$ for $i\in\set{\x,\y}$, then $\sigma_1^*(v,c_{\bad}^{+})(\ell)$ is a step function increasing in $v$. 
\begin{lemma}
\label{lem:ell}
For customers initially at a type $\bad$ firm, there exists $v_{01}\in[h^{-1}(c_{\bad}),\overline{v}]$ s.t.\ $\sigma_1^*(v,c_{\bad}^{+})(\ell)=0$ for $v<v_{01}$ and $1$ for $v>v_{01}$. 
\end{lemma}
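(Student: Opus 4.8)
The plan is to read the customer's action at price $c_{\bad}^{+}$ straight off the guessed equilibrium and then observe that the resulting set of ``learning'' valuations is an upper interval whose left endpoint lies above $h^{-1}(c_{\bad})$. By hypothesis only type $\bad$ puts positive probability on $c_{\bad}^{+}$ (type $\good$ plays $c_{\bad}$), so Bayes' rule~(\ref{mu}) gives $\mu_i(c_{\bad}^{+})=0$, and hence part~4 of the guessed equilibrium governs a customer initially facing $c_{\bad}^{+}$: writing $\gamma(v):=\mu_0(h(v)-c_{\bad})+(1-\mu_0)(v-c_{\bad}^{+})-c_{\ell}$, such a customer plays $\ell$ when $\gamma(v)\geq 0$ and $n$ when $\gamma(v)<0$ (and never buys immediately). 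So $\sigma_1^*(v,c_{\bad}^{+})(\ell)=1$ exactly on $\{v\in[0,\overline{v}]:\gamma(v)\geq 0\}$, and it remains to show this set has the form $[v_{01},\overline{v}]$ with $v_{01}\in[h^{-1}(c_{\bad}),\overline{v}]$, after which we take that $v_{01}$.

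Next I would note that $\gamma$ is continuous and strictly increasing, since $h'>1$ gives $\gamma'(v)=\mu_0h'(v)+(1-\mu_0)>1$. The set $\{v:\gamma(v)\geq 0\}$ is nonempty, because $\gamma(c_{\bad}^{+})=\mu_0(h(c_{\bad}^{+})-c_{\bad})-c_{\ell}\geq 0$ by the standing assumption $c_{\ell}\leq\mu_0(h(c_{\bad}^{+})-c_{\bad})$. Hence $v_{01}:=\inf\{v:\gamma(v)\geq 0\}$ is well defined, is attained by continuity, satisfies $v_{01}\leq c_{\bad}^{+}<\overline{v}$ (using $m<\overline{v}-c_{\bad}$), and, by strict monotonicity, $\gamma(v)<0$ for $v<v_{01}$ while $\gamma(v)>0$ for $v>v_{01}$ --- exactly the claimed step shape for $\sigma_1^*(\cdot,c_{\bad}^{+})(\ell)$.

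Finally, for the left-endpoint bound: if $c_{\bad}\leq h(0)$, then $h^{-1}(c_{\bad})$ is read as $0$ (the convention already in force in the profit formula of Lemma~\ref{lem:devprofit}) and $v_{01}\geq 0$ is trivial; if $c_{\bad}>h(0)$, put $v^{*}:=h^{-1}(c_{\bad})$, and since $h(v)\geq v$ forces $v^{*}\leq c_{\bad}<c_{\bad}^{+}$, we obtain $\gamma(v^{*})=(1-\mu_0)(v^{*}-c_{\bad}^{+})-c_{\ell}\leq-(1-\mu_0)m-c_{\ell}<0$, hence $v^{*}<v_{01}$ by monotonicity. In either case $v_{01}\in[h^{-1}(c_{\bad}),\overline{v}]$, which finishes the proof. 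The statement is essentially a monotone-comparative-statics observation, so I do not expect a genuine obstacle; the only point needing a little care is this left-endpoint bound, which rests on the single inequality $h^{-1}(c_{\bad})\leq c_{\bad}$ coming from $h(v)\geq v$, together with pinning down the convention for $h^{-1}$ below $h(0)$.
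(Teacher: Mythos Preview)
Your proof is correct and takes essentially the same approach as the paper: both establish that the learning condition is strictly increasing in $v$ via $h'>1$ and then check that it fails at $h^{-1}(c_{\bad})$ to obtain the left-endpoint bound. Your version is slightly more explicit---you read the threshold directly off part~4 of the guessed equilibrium rather than unpacking Definition~\ref{def:mix}(d), you verify nonemptiness of the learning set (so that $v_{01}\leq\overline{v}$, which the paper's proof does not explicitly check), and you address the convention for $h^{-1}(c_{\bad})$ when $c_{\bad}<h(0)$---but these are presentational refinements, not a different method.
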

\begin{proof}
Denote the type $\bad$ firm by $i$. Due to $P_j\leq c_{\bad}^{+}$, in Definition~\ref{def:mix}(d), $v-c_{\bad}^{+}$ may be dropped under the $\max$ w.l.o.g. 
If $\sum_{P_{j}\in S_P}\max\{0,\;\w(v,j,P_{j})\}[\mu_0\sigma_j^*(\good)(P_{j}) +(1-\mu_0)\sigma_j^*(\bad)(P_{j})] -c_{\ell}\geq 0$, then the inequality is strict for all $\hat{v}>v$. 

If $\w(v,j,P_{j})\leq 0$, then $v-c_{\bad}^{+}<0$, so the first inequality in Definition~\ref{def:mix}(d) holds. If $\w(v,j,P_{j})> 0$, then $0$ may be dropped under the $\max$ w.l.o.g. Then $h'>1$ and $\sum_{P_{j}\in S_P}[\mu_0\sigma_j^*(\good)(P_{j}) +(1-\mu_0)\sigma_j^*(\bad)(P_{j})]=1$ imply that the first inequality in Definition~\ref{def:mix}(d) is strict for all $\hat{v}>v_1$. 
So if $\sigma_1^*(v,c_{\bad}^{+})(\ell)>0$, then for all $\hat{v}>v$, $\sigma_1^*(\hat{v},c_{\bad}^{+})(\ell)=1$. 
Taking $v_{01}:=\inf\set{v:\sigma_1^*(v,c_{\bad}^{+})(\ell)>0}$ ensures that $\sigma_1^*(\hat{v},c_{\bad}^{+})(\ell)=0$ for $\hat{v}<v_{01}$ and $1$ for $\hat{v}>v_{01}$.

To prove $v_{01}\geq h^{-1}(c_{\bad})$, note that $h^{-1}(x)<x\;\forall x$, so $h^{-1}(c_{\bad})-c_{\bad}^{+}<0$. If $P_j\geq c_{\bad}$, then $\w(h^{-1}(c_{\bad}),j,P_{j})\leq 0$. The $-c_{\ell}$ term in Definition~\ref{def:mix}(d) then ensures $\sigma_1^*(h^{-1}(c_{\bad}),c_{\bad}^{+})(\ell)=0$. 
\end{proof}

Downward deviations by a type $\good$ firm are ruled out in the following Lemma. 
After that, the incentives of firm type $\bad$ are discussed, and then the deviations of $\good$ to $P_{\good}>c_{\bad}$ are ruled out. 
\begin{lemma}
\label{lem:Pgreatercb}
A type $\good$ firm's best response to the strategies of other players in the guessed equilibrium satisfies $P\geq c_{\bad}$. 
\end{lemma}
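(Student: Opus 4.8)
The plan is to show that any price $P < c_{\bad}$ yields a type $\good$ firm strictly less profit than the equilibrium price $c_{\bad}$, so no downward deviation is profitable. I would start from the profit expression in Lemma~\ref{lem:devprofit}, namely $\frac{1}{2}P\left[1-F_v(h^{-1}(P))+(1-\mu_0)\int_{h^{-1}(P)}^{\overline{v}}\sigma_1^*(v,c_{\bad}^{+})(\ell)dF_v(v)\right]$, and invoke Lemma~\ref{lem:ell} to replace the integral by $(1-\mu_0)[F_v(\overline{v})-F_v(\max\{h^{-1}(P),v_{01}\})]$, i.e.\ a clean term that is simply the mass of high-valuation customers at the bad-type competitor who learn and switch. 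The key structural observation is that the bracketed demand factor is bounded above by $1$ (demand per unit mass of assigned consumers cannot exceed one customer when also counting the switchers, but here it is genuinely at most $\frac{1}{2}[1 + (1-\mu_0)] < 1$ after the $\frac12$ is pulled out), and more to the point the whole expression as a function of $P$ on $[0,c_{\bad}]$ is dominated by its value at $P=c_{\bad}$.

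The core of the argument is monotonicity of the deviation profit in $P$ on the interval $[0,c_{\bad}]$. I would split into the range $P \le h(0)$ (where $h^{-1}(P)$ may be interpreted as $0$ or the relevant truncation applies) and $P$ closer to $c_{\bad}$, but the cleaner route is: for $P$ in the grid with $P\le c_{\bad}$, compare $P$ with $P+m$. Raising the price by one grid step multiplies the price factor by $(P+m)/P$ and can only shrink the demand bracket, since both $1-F_v(h^{-1}(\cdot))$ and the switching mass are non-increasing in $P$ (higher $P$ means fewer consumers with $h(v)\ge P$ and $v\ge h^{-1}(P)$). The crucial quantitative point, exactly as in the race-to-the-top intuition, is that the consumers at the bad-type competitor do not start learning or switching away merely because this firm nudged its price from $P$ to $P+m$: their learning decision in Definition~\ref{def:mix}(d) depends on $\w(v,j,P_j)$ where $P_j$ is the \emph{other} firm's price and on the belief, both unchanged, so the $\sigma_1^*(v,c_{\bad}^{+})(\ell)$ term is literally identical at $P$ and $P+m$ (it is governed by the competitor's price, not this firm's). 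Hence the demand bracket weakly decreases only through the $1-F_v(h^{-1}(P))$ term, and I need the price increase to outweigh that loss. This is precisely the assumption that the full-information monopoly profit $P[1-F_v(h^{-1}(P))]$ of type $\good$ strictly increases on $[0,c_{\bad}+m]$; adding the non-negative, $P$-independent-at-this-step switching term preserves strict monotonicity. Therefore the deviation profit is strictly increasing in $P$ across the grid up to $c_{\bad}$, so the best downward price is $c_{\bad}$ itself, giving $P\ge c_{\bad}$.

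The main obstacle I anticipate is handling the truncation carefully when $P$ is small relative to $h(0)$: if $P \le h(0)$ then $h^{-1}(P)$ should be read as $0$ (all assigned consumers with any valuation would buy at such a low price, and all high-valuation switchers at the competitor would switch), so the demand bracket is flat at its maximum $1-F_v(0) + (1-\mu_0)\int_0^{\overline v}\sigma_1^*(v,c_{\bad}^{+})(\ell)dF_v(v) = 1 + (1-\mu_0)[1-F_v(v_{01})]$ on that sub-range, and profit is then strictly increasing simply because it is a positive constant times $P$. So on $[0,h(0)]$ monotonicity is trivial, and on $[h(0),c_{\bad}]$ it reduces to the monopoly-profit monotonicity assumption plus the observation that the switching term does not respond to this firm's own one-step price change. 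One should also note $c_{\bad}=km\ge h(0)-m$ guarantees the interval $[h(0),c_{\bad}]$ is (essentially) nonempty on the grid, and the assumption $Nm \ge h(\overline v)$ ensures $c_{\bad}$ is itself a feasible grid price. Stitching the two sub-ranges together at $P=h(0)$ (where both formulas agree) completes the proof that $\good$'s best response is at least $c_{\bad}$.
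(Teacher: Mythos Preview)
Your approach is essentially the paper's: use Lemma~\ref{lem:devprofit} for the profit, use Lemma~\ref{lem:ell} to see that the switching integral is a constant $(1-\mu_0)[1-F_v(v_{01})]$, and then note that adding a constant $\bar{D}>0$ to demand preserves the monotonicity coming from the assumption that $P[1-F_v(h^{-1}(P))]$ is increasing on $[0,c_{\bad}+m]$, since $P[D(P)+\bar{D}]=P D(P)+P\bar{D}$ is a sum of two increasing functions.

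Two small points. First, your stated reason for the switching term being ``$P$-independent'' is not quite the right one: you argue that $\sigma_1^*(v,c_{\bad}^{+})(\ell)$ depends only on the competitor's price, which is true, but the lower limit $h^{-1}(P)$ of the integral \emph{does} depend on this firm's price. What makes the integral constant is the bound $v_{01}\geq h^{-1}(c_{\bad})\geq h^{-1}(P)$ from Lemma~\ref{lem:ell}, which forces $\max\{h^{-1}(P),v_{01}\}=v_{01}$ for every $P\leq c_{\bad}$; the paper uses exactly this. Second, once that constancy is established, the case split at $P=h(0)$ is unnecessary: the single formula $\frac{1}{2}P\bigl[1-F_v(h^{-1}(P))+(1-\mu_0)(1-F_v(v_{01}))\bigr]$ is increasing on all of $[0,c_{\bad}]$ directly from the monopoly-profit assumption plus the constant additive term.
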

\begin{proof}
Based on Lemma~\ref{lem:ell}, $\sigma_1^*(v,c_{\bad}^{+})(\ell)=0$ for all $v\leq h^{-1}(c_{\bad})\geq h^{-1}(P)$, where $P\leq c_{\bad}$. Therefore~(\ref{Hprofit}) reduces to $\frac{1}{2}P[1-F_{v}(h^{-1}(P))+(1-\mu_0)[1-F_{v}(v_{01})]]$, with $v_{01}$ independent of $P$. 
The assumption $P_{\good}^{m} :=\arg\max_P P[1-F_{v}(h^{-1}(P))] \geq c_{\bad}^{+}$ then implies $\arg\max_P \frac{1}{2}P[1-F_{v}(h^{-1}(P))+(1-\mu_0)[1-F_{v}(v_{01})]]\geq c_{\bad}^{+}$, because if $P_{\good}^{m}D(P_{\good}^{m})\geq PD(P)$ for all $P\leq P_{\good}^{m}$, then for any $\bar{D}>0$ and $P\leq P_{\good}^{m}$, we have $P_{\good}^{m}D(P_{\good}^{m})+P_{\good}^{m}\bar{D}\geq PD(P)+P\bar{D}$. 
So type $\good$ optimally sets a price $P\geq c_{\bad}$. 
\end{proof}

\begin{lemma}
\label{lem:Bprice}
In the guessed equilibrium, a type $\bad$ firm's best response to the strategies of other players is $P=c_{\bad}^{+}$.
\end{lemma}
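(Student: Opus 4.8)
The plan is to compare the type $\bad$ firm's profit at $c_{\bad}^{+}$ with its profit at every other price in $S_P$, partitioning the grid into $P<c_{\bad}$, $P=c_{\bad}$, $P=c_{\bad}^{+}$ and $P>c_{\bad}^{+}$. First I would note that $c_{\bad}^{+}=(k+1)m$ lies in $S_P$, since $m<\overline{v}-c_{\bad}$ gives $c_{\bad}^{+}<\overline{v}\le h(\overline{v})\le Nm$, and then compute the profit there. With belief $\mu_i(c_{\bad}^{+})=0$, a consumer initially at firm $i$ never buys immediately (part~4 of the guessed equilibrium), so $i$'s demand at $c_{\bad}^{+}$ comes only from consumers who learn and return. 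If $j$ is the good type (probability $\mu_0$), every consumer at $i$ who learns faces $\w(v,j,c_{\bad})=h(v)-c_{\bad}\ge v-c_{\bad}>v-c_{\bad}^{+}=\w(v,i,c_{\bad}^{+})$, hence switches to $j$ or leaves, while $j$'s own consumers never learn; so $i$ gets nothing. If $j$ is the bad type at $c_{\bad}^{+}$ (probability $1-\mu_0$), the assumption $c_{\ell}\le\mu_0(h(c_{\bad}^{+})-c_{\bad})$ makes the learning condition in part~4 hold at $v=c_{\bad}^{+}$, so by Lemma~\ref{lem:ell} the threshold satisfies $v_{01}\le c_{\bad}^{+}$; a consumer with $v\ge c_{\bad}^{+}$ therefore learns, finds $j$ at the \emph{same} price, and by the tie-break in Definition~\ref{def:mix}(a) buys from the assigned firm $i$, while consumers with smaller $v$ either do not learn or learn and leave and $j$'s consumers symmetrically stay at $j$. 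Hence $D_i(c_{\bad}^{+})=\frac{1}{2}(1-\mu_0)[1-F_v(c_{\bad}^{+})]$ and the profit at $c_{\bad}^{+}$ is $\frac{m}{2}(1-\mu_0)[1-F_v(c_{\bad}^{+})]>0$, using that $F_v$ has full support and $c_{\bad}^{+}<\overline{v}$.

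The remaining three cases are then short. For $P\le c_{\bad}-m$ the margin $P-c_{\bad}\le -m<0$ and $D_i(P)\ge 0$, so the profit is $\le 0$; this is the observation that $c_{\bad}^{+}$ weakly dominates every $P<c_{\bad}$. For $P=c_{\bad}$ the margin is exactly $0$, so the profit is $0$ whatever the demand. For $P>c_{\bad}^{+}$, i.e.\ $P\ge c_{\bad}+2m$, I would show $D_i(P)=0$: the belief is $\mu_i(P)=0$, a consumer initially at $i$ either learns or leaves, and if he learns the other firm's price — $c_{\bad}$ when $j$ is good, $c_{\bad}^{+}$ when $j$ is bad — is strictly below $P$, so $\w(v,j,P_j)>v-P=\w(v,i,P)$ and the consumer switches or leaves; moreover no consumer is poached from $j$ (if $j$ is good its consumers never learn; if $j$ is bad its learning consumers see $i$ at a strictly higher price and stay at $j$). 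Thus the profit is again $0$. Comparing with the strictly positive profit at $c_{\bad}^{+}$ yields the lemma.

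I expect the only nontrivial step to be the demand computation at $c_{\bad}^{+}$ in the first paragraph — in particular pinning down that the high-valuation customers of a bad-type firm do learn (this is exactly where $c_{\ell}\le\mu_0(h(c_{\bad}^{+})-c_{\bad})$ and Lemma~\ref{lem:ell} are used) and that after learning they return to the assigned firm rather than switching or leaving (this uses the tie-breaking convention together with $v\ge c_{\bad}^{+}$, and $c_{\bad}^{+}<\overline{v}$ to guarantee a positive mass of such customers). The Bertrand-type argument ruling out $P>c_{\bad}^{+}$ is conceptually the heart of the race to the bottom for the bad type, but given the fixed belief structure $\mu_i(P)=0$ for $P>c_{\bad}$ it reduces to the price comparison above and presents no real difficulty.
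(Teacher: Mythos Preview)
Your proof is correct and follows essentially the same case split as the paper's: rule out $P\le c_{\bad}$ by weak dominance, then show any $P>c_{\bad}^{+}$ loses all demand because every consumer at $i$ either leaves or learns and switches (the other firm's price being at most $c_{\bad}^{+}$), while no consumer is poached from $j$. The only difference is stylistic: where you explicitly compute $D_i(c_{\bad}^{+})=\tfrac{1}{2}(1-\mu_0)[1-F_v(c_{\bad}^{+})]$ and $D_i(P)=0$ for $P>c_{\bad}^{+}$, the paper simply notes that, conditional on the rival being type $\bad$, the two bad types are in Bertrand competition over the learning consumers, and invokes the standard Bertrand conclusion that price $c_{\bad}^{+}$ is the best response. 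Your explicit demand computation (including the use of $c_{\ell}\le\mu_0(h(c_{\bad}^{+})-c_{\bad})$ to pin down $v_{01}\le c_{\bad}^{+}$ and the tie-break to return learners to the assigned firm) is a more careful unpacking of the same argument.
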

\begin{proof}
A type $\bad$ firm clearly does not deviate to $P\leq c_{\bad}$, which is weakly dominated by $P=c_{\bad}^{+}$. 
Consider $\bad$'s deviations to $P> c_{\bad}^{+}$.
Parts 1 and 4 of the guessed equilibrium ensure that each customer initially at a firm charging $P\geq c_{\bad}^{+}$ either does not buy or learns the prices of both firms. 
If the other firm is type $\good$, then by part 5 of the guessed equilibrium, all customers leave a $\bad$ firm. Therefore $\bad$ sets a price that best responds to facing $\bad$ with certainty. If both firms are type $\bad$, then Bertrand competition leads both to set price $c_{\bad}^{+}$. 
\end{proof} 

Having ruled out deviations by $\bad$, the final step (Lemma~\ref{lem:Plesscb}) is to eliminate upward deviations by a type $\good$ firm. 
\begin{lemma}
\label{lem:Plesscb}
A type $\good$ firm's best response to the strategies of other players in the guessed equilibrium satisfies $P\leq c_{\bad}$. 
\end{lemma}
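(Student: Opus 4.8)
The plan is to rule out every upward deviation of a type $\good$ firm, i.e.\ to show that no price $P > c_{\bad}$ is a best response; together with Lemma~\ref{lem:Pgreatercb} this pins the best response down to $P = c_{\bad}$. The starting point is that any $P > c_{\bad}$ carries belief $\mu_i(P) = 0$ by part~1 of the guessed equilibrium, so by part~4 every consumer initially at firm $i$ either learns both prices or leaves. Hence all of the deviating firm's demand must come from consumers who, after learning, strictly prefer $i$ to both the outside option and the other firm. I will treat $P > c_{\bad}^{+}$ and $P = c_{\bad}^{+}$ separately, these being the only grid prices above $c_{\bad}$.

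For $P > c_{\bad}^{+}$ I will show the deviating good type gets zero demand. A consumer who learns evaluates firm $i$ at $\w(v,i,P) = v - P$. When the other firm is $\good$ (price $c_{\bad}$, belief $1$) he compares this with $\w(v,j,c_{\bad}) = h(v) - c_{\bad}$, which is strictly larger since $h(v) \geq v$ and $P > c_{\bad}$; when the other firm is $\bad$ (price $c_{\bad}^{+}$, belief $0$) he compares with $\w(v,j,c_{\bad}^{+}) = v - c_{\bad}^{+} > v - P$. In either case $\w(v,i,P)$ is strictly dominated, so neither a consumer initially at $i$ nor one initially at $j$ (who, moreover, does not even learn when $j$ is $\good$, by part~3) ever buys from $i$. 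Thus $D_i(P) = 0$ and the deviation profit is $0$, strictly below $\pi(c_{\bad})$, which is positive because $h^{-1}(c_{\bad}) < c_{\bad} < \overline{v}$ and $f_v > 0$ give $1 - F_v(h^{-1}(c_{\bad})) > 0$.

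For the borderline deviation $P = c_{\bad}^{+}$ the argument becomes quantitative. Consumers initially at $j$ still contribute nothing: when $j$ is $\good$ they do not learn, and when $j$ is $\bad$ the gains from trade are exactly tied, $\w(v,i,c_{\bad}^{+}) = \w(v,j,c_{\bad}^{+}) = v - c_{\bad}^{+}$, so the tie-breaking rule in Definition~\ref{def:mix}(a) keeps them at $j$. A consumer initially at $i$ learns exactly when $v > v_{01}$ (Lemma~\ref{lem:ell}); when $j$ is $\good$ he moves to $j$, since $\w(v,j,c_{\bad}) > \w(v,i,c_{\bad}^{+})$, and when $j$ is $\bad$ the tie at $v - c_{\bad}^{+}$ keeps him at $i$ iff $v \geq c_{\bad}^{+}$. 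Evaluating the learning condition in part~4 at $v = c_{\bad}^{+}$ and using $c_{\ell} \leq \mu_0(h(c_{\bad}^{+}) - c_{\bad})$ gives $v_{01} \leq c_{\bad}^{+}$, so the deviation demand is $\frac12(1-\mu_0)[1 - F_v(c_{\bad}^{+})]$ and the deviation profit is $\frac12 c_{\bad}^{+}(1-\mu_0)[1 - F_v(c_{\bad}^{+})]$. On the other hand, Lemmas~\ref{lem:devprofit} and~\ref{lem:ell} give $\pi(c_{\bad}) = \frac12 c_{\bad}[\,1 - F_v(h^{-1}(c_{\bad})) + (1-\mu_0)(1 - F_v(v_{01}))\,]$, which, again using $v_{01} \leq c_{\bad}^{+}$, is at least $\frac12 c_{\bad}[\,1 - F_v(h^{-1}(c_{\bad})) + (1-\mu_0)(1 - F_v(c_{\bad}^{+}))\,]$. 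Now $h^{-1}(c_{\bad}) < c_{\bad}^{+}$ gives $1 - F_v(h^{-1}(c_{\bad})) \geq 1 - F_v(c_{\bad}^{+})$, and $m < \mu_0 c_{\bad}/(1-\mu_0)$ gives $c_{\bad} > m(1-\mu_0)$, so $c_{\bad}[1 - F_v(h^{-1}(c_{\bad}))] \geq m(1-\mu_0)[1 - F_v(c_{\bad}^{+})]$; adding $c_{\bad}(1-\mu_0)(1 - F_v(c_{\bad}^{+}))$ to both sides and halving yields $\pi(c_{\bad}) \geq \frac12 c_{\bad}^{+}(1-\mu_0)(1 - F_v(c_{\bad}^{+}))$, with strict inequality because $1 - F_v(h^{-1}(c_{\bad})) > 0$ (and if $1 - F_v(c_{\bad}^{+}) = 0$ the deviation profit vanishes outright). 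Hence no $P > c_{\bad}$ beats $c_{\bad}$.

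The comparisons of gains from trade are routine; the one delicate point is the knife-edge deviation $P = c_{\bad}^{+}$, where the good type becomes indistinguishable from the bad type in consumers' eyes and the gains from trade are exactly tied, so whether the good type keeps or loses these consumers is decided entirely by the built-in tie-breaking rule. The main obstacle will be organising the revenue bookkeeping so that the two smallness assumptions $m < \mu_0 c_{\bad}/(1-\mu_0)$ and $c_{\ell} \leq \mu_0(h(c_{\bad}^{+}) - c_{\bad})$ are exactly what is needed to make $\pi(c_{\bad}^{+}) < \pi(c_{\bad})$.
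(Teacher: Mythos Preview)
Your proof is correct and follows the same two-case split as the paper: first ruling out $P>c_{\bad}^{+}$ by showing zero demand, then comparing the profit at $P=c_{\bad}^{+}$ (namely $\frac{1-\mu_0}{2}c_{\bad}^{+}[1-F_v(c_{\bad}^{+})]$) to the profit at $c_{\bad}$ and invoking $m(1-\mu_0)<\mu_0 c_{\bad}$. You are in fact more careful than the paper on two points: you explicitly derive the deviation demand at $c_{\bad}^{+}$ via tie-breaking and the bound $v_{01}\leq c_{\bad}^{+}$ (which the paper simply asserts), and you carry the correct term $1-F_v(h^{-1}(c_{\bad}))$ from Lemma~\ref{lem:devprofit} rather than the paper's $1-F_v(c_{\bad})$.
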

\begin{proof}
If a type $\good$ firm $i$ sets $P>c_{\bad}^{+}$, then it gets zero demand in the guessed equilibrium, because $\mu_i(P)=0$ and the other firm $j$ is expected to set price $P_j\leq c_{\bad}^{+}<P$. So $P>c_{\bad}^{+}$ is not a profitable deviation. 
At $P=c_{\bad}^{+}$, belief is $\mu_i(P)=0$ and demand is $\frac{1-\mu_0}{2}[1-F_{v}(c_{\bad}^{+})]$. The profit of $\good$ is then $c_{\bad}^{+}\frac{1-\mu_0}{2}[1-F_{v}(c_{\bad}^{+})]$, which is less than the profit at $P=c_{\bad}$ iff $c_{\bad}^{+}(1-\mu_0)[1-F_{v}(c_{\bad}^{+})]\leq c_{\bad}[1-F_{v}(c_{\bad})+(1-\mu_0)[1-F_{v}(v_{01})]]$. Sufficient for this is 
$m(1-\mu_0) \leq \mu_0c_{\bad}$, which holds by assumption. 
\end{proof}
Combining Lemmas~\ref{lem:devprofit}--\ref{lem:Plesscb},
the guessed equilibrium is verified.

\section{Proofs omitted from the main text}
\label{sec:proofs}

\begin{proof}[Proof of Lemma~\ref{lem:D2}]
In any equilibrium, the incentive constraints (ICs) $P_{\good} D_i(P_{\good})\geq P D_i(P)$ and $(P_{\bad}-c_{\bad}) D_i(P_{\bad})\geq (P-c_{\bad}) D_i(P)$ hold for any $P,P_{\good},P_{\bad}$ s.t.\ $\sigma_i^*(\theta)(P_{\theta})>0$. 
Demand and price are nonnegative and finite by definition.
From $(P_{\bad}-c_{\bad}) D_i(P_{\bad})\geq (P_{\good}-c_{\bad}) D_i(P_{\good})$ and $P_{\good} D_i(P_{\good})\geq P_{\bad} D_i(P_{\bad})$, we get $(P_{\bad}-c_{\bad}) D_i(P_{\bad})\geq P_{\good} D_i(P_{\good})-c_{\bad}D_i(P_{\good})\geq P_{\bad} D_i(P_{\bad})-c_{\bad}D_i(P_{\good})$, so $D_i(P_{\bad})\leq D_i(P_{\good})$. 

If $0<D_i(P_{\bad})\leq D_i(P_{\good})$ and $(P_{\bad}-c_{\bad}) D_i(P_{\bad})\geq (P_{\good}-c_{\bad}) D_i(P_{\good})$, then $P_{\bad}-c_{\bad}\geq P_{\good}-c_{\bad}$, so $P_{\bad}\geq P_{\good}$. 
\end{proof}

\begin{proof}[Proof of Lemma~\ref{lem:posprofit}]
Suppose that for both $i\in\set{\x,\y}$ there exists $P_{i0}$ s.t.\ $D_i(P_{i0})=0$ and $\sigma_i^*(\bad)( P_{i0})>0$. 
Then $\pi_{i\bad}^*=0$, otherwise $\bad$ would deviate to put probability $1$ on prices at which profit is positive. 
If $P_i<\overline{v}$ on or off the equilibrium path, then for all $v>P_i$, $\w(v,i,P_{i})>0$ and by Definition~\ref{def:mix}, $\sigma_1^*(v,P_i)(n) =0 =\sigma_2^*(v,P_i,P_j)(n_{\ell}) =\sigma_2^*(v,P_j,P_i)(n_{\ell})$ for any $P_j$ and $i\neq j\in\set{\x,\y}$. Then for all $v\in(P_i,\overline{v}]$, $\sigma_1^*(v,P_i)(\ell)+\sigma_1^*(v,P_i)(b)=1$ and $\sigma_2^*(v,P_i,P_j)(b_i) +\sigma_2^*(v,P_i,P_j)(b_j)=1$, i.e.\ the consumer types $v\in(P_i,\overline{v}]$ buy from some firm. Thus $D_i(P_i)+D_j(P_j)\geq 1-F_{v}(P_i)>0$ for any $P_j$ and any $P_i<\overline{v}$. 
With probability at least $\sigma_j^*(\bad)( P_{j0})>0$, 
$D_j(P_j)=0$ and $D_i(P_i)>0$ for any $P_i<\overline{v}$. Thus deviating from $P_{i0}$ to $P\in(c_{\bad},\overline{v})$ yields positive expected profit to $i$, contradicting $\sigma_i^*(\bad)( P_{i0})>0$. Therefore in any equilibrium, type $\bad$ of firm $i$ only sets prices at which profit is positive. 

Because $\good$ can imitate $\sigma^*(\bad)$ at a strictly lower cost, we have $\pi_{i \good}^*\geq\pi_{i \bad}^*$ in any equilibrium, with strict inequality if $\pi_{i \good}^*>0$. 

Next, the Intuitive Criterion is used to show that type $\bad$ of firm $i$ partially separates. 
Suppose that 
if 
$\sigma_i^*(\bad)(P_i)>0$, then  $\sigma_i^*(\good)(P_i)>0$ (no separation of $\bad$). Lemma~\ref{lem:D2} implies that there is at most one $P$ s.t.\ $\sigma_i^*(\theta)(P)>0$ for $\theta\in\set{\good,\bad}$. We supposed that there is no $\hat{P}$ s.t.\ $\sigma_i^*(\bad)(\hat{P})>0 =\sigma_i^*(\good)(\hat{P})$, so $P_i$ is the unique price with $\sigma_i^*(\bad)(P_i)>0$. Thus $\sigma_i^*(\bad)(P_i)=1$, which implies $\mu_i(P_i)\leq \mu_0$. 
From $D_i(P_i)>0$, we get 
$\w(\overline{v},i,P_{i})>0$. 
Define $D_i^{\mu}(P)$ as demand for firm $i$ when belief is fixed at $\mu_i(P)=\mu\in[0,1]$. 
Clearly $D_i^{\mu}(P)$ decreases in $P$ and increases in $\mu$. 
Define $\mathcal{P}_{\neg \bad}:=\{P\in S_P: (P-c_{\bad})D_i^0(P)<\pi_{i\bad}^*>(P-c_{\bad})D_i^1(P)\}$, which is nonempty, because if $P\leq c_{\bad}$, then $ P\in\mathcal{P}_{\neg \bad}$. If $\sigma_i^*(\bad)(\tilde{P})>0$, then $\tilde{P}\notin \mathcal{P}_{\neg \bad}$. 

To apply the Intuitive Criterion, set $\mu_i(P)=1$ for all $P\in \mathcal{P}_{\neg \bad}$. Check whether $\good$ wants to deviate to $P_{d}:=\max\{P\in \mathcal{P}_{\neg \bad}:P< P_i\}\geq c_{\bad}$. 
By definition of $\mathcal{P}_{\neg \bad}$ and $S_P$, there exists $P_n\in(P_d,P_i]$ (possibly equal to $P_i$) s.t.\ $P_d-P_n\leq m$ and $(P_n-c_{\bad})D_i(P_n)\geq \pi_{i\bad}^*=(P_i-c_{\bad})D_i(P_i)$. Focus on the minimal such $P_n$. The definition of $P_d$ implies $P_n\geq c_{\bad}^{+}$. Due to $\pi_{i\good}^* =P_iD_i(P_i) =\pi_{i\bad}^*+c_{\bad}D_i(P_i)$, we have
$P_nD_i(P_n)\geq \pi_{i\bad}^*+c_{\bad}D_i(P_n)
=\pi_{i\good}^* +c_{\bad}D_i(P_n)-c_{\bad}D_i(P_i)$. 

Suppose $P_n<P_i$ (implying $P_d<P_i-m$). Then due to $(P_n-c_{\bad})D_i(P_n)\geq (P_i-c_{\bad})D_i(P_i)=\pi_{i\bad}^*$, we have $D_i(P_n)>D_i(P_i)$ and $P_nD_i(P_n)\geq\pi_{i\good}^*-c_{\bad}D_i(P_i)+c_{\bad}D_i(P_n)>\pi_{i\good}^*$, contradicting equilibrium. The remaining possibility is $P_n=P_i$ (so $P_d=P_i-m$). 

If $h(0)\leq c_{\bad}^{+}$ (which holds by assumption) and $\mu_i(P_i)\leq \mu_0$, then a positive mass of consumers initially at $i$ do not buy from $i$ at $P_i\in[ c_{\bad}^{+},h(\overline{v})]$ (consumer $v=0$ strictly prefers not to). 
In that case, $i$ reducing price to $P_d$ and increasing belief to $\mu_i(P_d)=1> \mu_0\geq \mu_i(P_i)$ strictly increases total demand $D_i(\cdot)+D_j(\cdot)$ by at least $\frac{1}{2}[1-F_{v}(h^{-1}(P_d))-1+F_{v}(v_2)]>0$, where 
$v_2$ solves 
$\w(v_2,i,P_{i})=0$. 
Demand weakly increases as $P_i$ falls, so $D_i^1(P_d)-D_i(P_i)\geq \frac{1}{2}[F_{v}(v_2)-F_{v}(h^{-1}(P_d))]>0$. 

If $m<m^*\leq \frac{P_d[D_i^1(P_d)-D_i(P_i)]}{D_i(P_i)}$ and $P_i-P_d=m$, then $P_dD_i^1(P_d)-P_dD_i(P_i)-m D_i(P_i)>0$, i.e.\ $P_dD_i^1(P_d)>P_iD_i(P_i)=\pi_{i\good}^*$, contradicting equilibrium in the case $P_n=P_i$. 
The Intuitive Criterion thus eliminates equilibria in which $\sigma_i^*(\bad)(P_i)>0\Rightarrow \sigma_i^*(\good)(P_i)>0$, pooling among them. 

Because there exists $P_i\geq c_{\bad}^{+}$ s.t.\ $D_i(P_i)>0$ and $\sigma_i^*(\bad)(P_i)>0=\sigma_i^*(\good)(P_i)$, we have $\mu_i(P_i)=0$ with probability at least $\sigma_i^*(\bad)(P_i)>0$. If either type of firm $j$ sets price $P_j=c_{\bad}^{+}$, then $\mu_j(P_j)\geq \mu_i(P_i)$ and $P_j\leq P_i$ with the positive probability $\sigma_i^*(\bad)(P_i)>0$, so firm $j$ makes positive profit. 

Due to $\pi_{j\theta}^*>0$, we can apply the Intuitive Criterion reasoning above to firm $j$ to prove that type $\bad$ of firm $j$ also separates at least partially. 
\end{proof}

\begin{proof}[Proof of Lemma~\ref{lem:Hprice}]
If $m\leq m^*$, then by Lemma~\ref{lem:posprofit}, $\pi_{i\theta}^*>0$ for both $i\in\set{\x,\y}$ and $\theta\in\set{\good,\bad}$. The Intuitive Criterion then implies $\mu_i(P)=1$ for all $P\leq c_{\bad}$. 

Suppose that $P_{i\good}:=\min\{P:\sigma_i(\good)(P)>0\}\leq \min\{c_{\bad}-m,\;\min\{P:\sigma_j(\good)(P)>0\}+c_{\ell}\}$. If firm $i$ raises price to $P_{i\good}+m$, then due to $m<c_{\ell}$, consumers initially at firm $i$ still choose $\sigma_1(v,P_{i\good}+m)(\ell)=0$. 
The customers at $j$ who chose $\ell$ anticipating $\sigma_i^*(\good)$ do not know about the deviation, so still choose $\ell$. 
Upon learning $P_{i\good}+m$, a customer initially at $j$'s type $\bad$ has a choice between $\bad$ at $P_{\bad}\geq c_{\bad}^{+}$ and $\good$ at $P_{i\good}+m\leq c_{\bad}^{+}$, so still buys from $i$. 
A customer initially at $j$'s type $\good$ still buys from $i$ if $P_{j\good}\geq P_{i\good}+2m$. If $P_{j\good}\in[P_{i\good}, P_{i\good}+m]$, then due to $P_{i\good}\leq c_{\bad}-m$ and $P_{\bad}\geq c_{\bad}^{+}$, the type of $j$ is revealed as $\good$. 

A type $v$ customer initially at $j$ chooses $\sigma_1^*(v,P)(\ell)=1$ only if $h(v)-P_{j\good}\leq \mu_0\sum_{P_{i}\in S_P}\max\{0,\;h(v)-P_{i},\; h(v)-P_{j\good}\}\sigma_i^*(\good)(P_{i}) +(1-\mu_0)\sum_{P_{i}\in S_P}\max\{0,\;v-P_{i},\; h(v)-P_{j\good}\}\sigma_i^*(\bad)(P_{i}) -c_{\ell}\geq 0$. From $P_{i\good}\leq P_{j\good} \leq P_{i\good}+m < c_{\bad}^{+}$, we get $\sum_{P_{i}\in S_P}\max\{0,\;h(v)-P_{i},\; h(v)-P_{j\good}\}\sigma_i^*(\good)(P_{i})\leq \max\{0,\;h(v)-P_{j\good}+m\}$ and $\sum_{P_{i}\in S_P}\max\{0,\;v-P_{i},\; h(v)-P_{j}\}\sigma_i^*(\bad)(P_{i})\leq \max\{0,\;h(v)-P_{j}\}$. Due to $m<c_{\ell}$, if $P_{j\good}\in[P_{i\good}, P_{i\good}+m]$, then $\sigma_j(v,P_{j\good})(\ell)=0$. The customers who might switch away from $i$ after a price increase do not learn both prices, so have no choice of switching.

On $P\leq c_{\bad}$, the profit of $\good$ is then given by~(\ref{Hprofit}) and Lemmas~\ref{lem:ell}--\ref{lem:Pgreatercb} prove that $\sigma_i^*(\good)(P_i)=0$ for all $P_i< c_{\bad}$. 
\end{proof}

\begin{proof}[Proof of Theorem~\ref{thm:unique}]
By Lemma~\ref{lem:posprofit}, 
there exist $P_i,P_j\geq c_{\bad}^{+}$ s.t.\ $D_i(P_i)>0$, $D_j(P_j)>0$, $\sigma_i^*(\bad)(P_i)>0=\sigma_i^*(\good)(P_i)$ and $\sigma_j^*(\bad)(P_j)>0=\sigma_j^*(\good)(P_j)$. Choose the maximal such $P_i,P_j$ and assume $P_i\geq P_j$ w.l.o.g. Clearly $\mu_i(P_i)=0=\mu_j(P_j)$. By Lemma~\ref{lem:D2}, if $\sigma_j^*(\theta)(P_{j\theta})>0$, then $P_{j\good}\leq P_{j\bad}$, so if $\sigma_j^*(\bad)(P_j)>0=\sigma_j^*(\good)(P_j)$, then $P_{j\good}<P_{j}\leq P_i$. 

By Definition~\ref{def:mix}(d),(e), consumer $v$ initially at firm $i$ charging $P_i$ chooses $\sigma_1^*(v,P_i)(b)=0$ if $v-P_{i}\leq \sum_{P\in S_P}\max\{0,\;\w(v,j,P),\;v-P_{i}\}[\mu_0\sigma_j^*(\good)(P) +(1-\mu_0)\sigma_j^*(\bad)(P)] -c_{\ell}$. W.l.o.g.\ $v-P_i$ may be dropped under the $\max$, due to $P\leq P_i$. Sufficient for $\sigma_1^*(v,P_i)(b)=0$ is then 
$v-P_{i}+c_{\ell}\leq \sum_{P\in S_P}[\mu_j(P)h(v)+(1-\mu_j(P))v-P_{i}][\mu_0\sigma_j^*(\good)(P) +(1-\mu_0)\sigma_j^*(\bad)(P)]$,
equivalently $c_{\ell}\leq \sum_{P\in S_P}\mu_j(P)[h(v)-v][\mu_0\sigma_j^*(\good)(P) +(1-\mu_0)\sigma_j^*(\bad)(P)]$. This holds iff $c_{\ell}\leq \mu_0[h(v)-v]$, due to Bayes' rule~(\ref{mu}) 
and $\sum_{P\in S_P}\sigma_j^*(\theta)(P)=1$. 

Consumers $v<P_{i}$ always choose $\sigma_1^*(v,P_i)(b)=0$. Due to $P_{i}\geq c_{\bad}^{+}$, all consumers initially at firm $i$ choose $\sigma_1^*(v,P_i)(b)=0$ if $c_{\ell}\leq \mu_0[h(c_{\bad}^{+})-c_{\bad}^{+}+m]$, i.e.\ $c_{\ell}\leq \mu_0[h(c_{\bad}^{+})-c_{\bad}]$, which holds by assumption. If all consumers at type $\bad$ of firm $i$ choose $\sigma_1^*(v,P_i)(b)=0$, then $\bad$ is in Bertrand competition with firm $j$'s type $\bad$ over the consumers choosing $\sigma_1^*(v,P_i)(\ell)=1$. There is a positive mass of such consumers, because $D_i(P_i)>0$ by Lemma~\ref{lem:posprofit}. If $P_i>P_j$, then $\sigma_2^*(v,P_i,P_j)(b_i)=0=\sigma_2^*(v,P_j,P_i)(b_i)$, 
contradicting $D_i(P_i)>0$. 

If $P_i=P_j>c_{\bad}^{+}$, then deviating from $P_i$ to the adjacent price $P_d=P_i-m$ undercuts firm $j$'s price $P_j$, increasing $D_i$ by at least $D_j(P_j)>0$. If $(P_i-c_{\bad})D_i(P_i) <(P_d-c_{\bad})[D_i(P_d)+D_j(P_j)]$, then the deviation is profitable. Because $P_i-c_{\bad}=P_d+m-c_{\bad}\geq 2m$, we get $P_i-c_{\bad}\leq 2(P_d-c_{\bad})$. Due to $P_i=P_j$, we can focus on either firm, so assume $D_i(P_i)\leq D_j(P_j)$ w.l.o.g. This with $P_i-c_{\bad}\leq 2(P_d-c_{\bad})$ yields $(P_i-c_{\bad})D_i(P_i) \leq(P_d-c_{\bad})[D_i(P_d)+D_j(P_j)]$. Not all consumers buy at $P_i>0$, $\mu_i(P_i)=0$, so $P_d<P_i$ and $\mu_i(P_d)\geq \mu_i(P_i)$ result in $D_i(P_d)>D_i(P_i)$. Undercutting is profitable at $P_i=P_j>c_{\bad}^{+}$, so if $\sigma_i^*(\bad)(P)>0$, then $P\leq c_{\bad}^{+}$. 

If $\sigma_i^*(\bad)(c_{\bad}^{+})>0=\sigma_i^*(\good)(c_{\bad}^{+})$, then $\sigma_i^*(\good)(c_{\bad})=1$ by Lemma~\ref{lem:Hprice}. Finally, $\pi_{i\bad}^*>0$ implies $\sigma_i^*(\bad)(c_{\bad})=0$, so $\sigma_i^*(\bad)(c_{\bad}^{+})=1$. 
\end{proof}

\begin{proof}[Proof of Proposition~\ref{prop:complete}]
Price $m$ is available to type $\good$, with $D_i(m)>0$ regardless of $\sigma_j$, because $m<c_{\ell}$. Therefore $\pi_{i\good}^*>0$ for $i\in\set{\x,\y}$. 

Denote $\min\{P:\sigma_i^*(\good)(P)>0\}$ by $P_{i\good}$ for $i\in\set{\x,\y}$ and assume w.l.o.g.\ $P_{i\good}\leq P_{j\good}<h(\overline{v})$. 
A customer type $v\geq h^{-1}(P_{i\good}+m)$ initially at firm $i$ who sees that the firm is type $\good$ and charges $P_{i\good}+m$ chooses $\sigma_1^*(v,P_{i\good}+m)(\ell)=0$ if 
$h(v)-P_{i\good}-m > V :=\mu_0\sum_{P_{j}\in S_P}\max\{h(v)-P_{j},\; h(v)-P_{i\good}-m\}\sigma_j^*(\good)(P_{j}) +(1-\mu_0)\sum_{P_{j}\in S_P}\max\{v-P_{j},\; h(v)-P_{i\good}-m\}\sigma_j^*(\bad)(P_{j}) -c_{\ell}$. Type $\bad$ sets $P\geq c_{\bad}^{+}$ (which weakly dominates $P\leq c_{\bad}$), firm $j$'s type $\good$ sets $P_{j\good}\geq P_{i\good}$ by assumption, and $\sum_{P_{j}\in S_P}\sigma_j^*(\theta)(P_{j})=1$, so 
$V\leq \mu_0[h(v)-P_{i\good}] +(1-\mu_0)\max\{v-c_{\bad}^{+},\; h(v)-P_{i\good}-m\} -c_{\ell}$. 
Sufficient for customer type $v\geq h^{-1}(P_{i\good}+m)$ facing $P_{i\good}+m$ not to learn is 
$(1-\mu_0)[h(v)-P_{i\good}-m] >\mu_0 m -c_{\ell} +(1-\mu_0)\max\{v-c_{\bad}^{+},\; h(v)-P_{i\good}-m\}$, which holds if 
$P_{i\good}+m <h\left(c_{\bad}^{+} +\frac{c_{\ell}-\mu_0 m }{1-\mu_0}\right)$
So type $\good$ of firm $i$ increases price to at least $\min\{P_{\good}^m,\; h(c_{\bad}^{+})-m\}$. 

Firm $i$ was arbitrary, so the same reasoning applies to firm $j$. 
\end{proof}

\bibliographystyle{ecta}
\bibliography{teooriaPaberid} 
\end{document}